\theoremstyle{plain}
\theoremstyle{definition}
\newcommand\LCb{\LC_{\beta\eta}} 
\newcommand\LCbfix{\LC_{\beta\eta,\fix}}
\newcommand{\pullback}{\ar@{}[dr]|<<{\ulcorner}}
\newcommand{\pushout}{\ar@{}[ul]|<<{\ulcorner}}
\newcommand{\Mon}{\ensuremath{\mathsf{Mon}}}
\newcommand{\Mod}{\ensuremath{\mathsf{Mod}}}
\newcommand{\LMod}{\ensuremath{\int_R\Mod(R)}}
\newcommand{\WRep}{\ensuremath{\int_{\Sigma}\Mon^{\Sigma}}}
\newcommand{\rar}{\longrightarrow}
\newcommand{\LC}{{\mathsf{LC}}}
\newcommand{\app}{{\mathsf{app}}}
\newcommand{\abs}{{\mathsf{abs}}}
\newcommand{\fix}{{\mathsf{fix}}}
\newcommand{\Cat}[1]{\mathsf{#1}}
\newcommand{\Set}{{\Cat{Set}}}
\newcommand{\swap}{{\mathsf{swap}}}
\newcommand{\NN}{{\mathbb{N}}}
\newcommand{\oneSig}{\text{1-}\mathsf{Sig}}
\newcommand{\Sig}{\ensuremath{\mathsf{Sig}}}
\newcommand{\twoSig}{\ensuremath{2\Sig}}
\newcommand{\twoMod}{\ensuremath{2\Mod}}
\newcommand{\InttwoMod}{\ensuremath{\int_{(\Sigma,E)} \Mon^{(\Sigma,E)}}}
\newcommand{\inl}{\ensuremath{\mathsf{inl}}}
\newcommand{\inr}{\ensuremath{\mathsf{inr}}}
\newcommand{\m}{\mathsf{m}}
\newcommand{\e}{\mathsf{e}}
\newcommand{\UniMath}{\href{https://github.com/UniMath/UniMath}{\nolinkurl{UniMath}}\xspace}
\newcommand{\coqhash}{1539d1c}
\newcommand{\nolinkcoqident}[1]{\nolinkurl{#1}} 
\newcommand{\coqident}{\begingroup\@makeother\#\@coqident}
\newcommand{\@coqident}[3][]{%
  \ifthenelse{\isempty{#2}}%
  {\nolinkcoqident{#3}}%
  {\ifthenelse{\isempty{#1}}%
  {\href{\coqdocurl{#2}{#3}}{\nolinkcoqident{#3}}}%
  {\href{\coqdocurl{#2}{#3}}{\nolinkcoqident{#1}}}}%
\endgroup}
\title{Modular specification of monads through higher-order presentations}
\author{Benedikt Ahrens}
       {University of Birmingham, United Kingdom}
       {B.Ahrens@cs.bham.ac.uk}
       {https://orcid.org/0000-0002-6786-4538}
       {Ahrens acknowledges the support of the Centre for Advanced Study (CAS) in Oslo, Norway, which funded and hosted the research project \emph{Homotopy Type Theory and Univalent Foundations} during the 2018/19 academic year.}
\author{Andr\'e Hirschowitz}
       {Université Nice Sophia Antipolis, France}
       {ah@unice.fr}
       {https://orcid.org/0000-0003-2523-1481}
       {}
\author{Ambroise Lafont}
       {IMT Atlantique
	\\
        Inria, LS2N CNRS, France}
       {ambroise.lafont@inria.fr}
       {https://orcid.org/0000-0002-9299-641X}{}
\author{Marco Maggesi}
       {Università degli Studi di Firenze, Italy}
       {marco.maggesi@unifi.it}
       {https://orcid.org/0000-0003-4380-7691}
       {Supported by GNSAGA-INdAM and MIUR.}
\authorrunning{B. Ahrens, A. Hirschowitz, A. Lafont, and M. Maggesi}
\keywords{free monads, presentation of monads, initial semantics, signatures, syntax, monadic substitution, computer-checked proofs}
\begin{document}
\maketitle

\begin{abstract}
In their work on second-order equational logic,
Fiore and Hur have studied presentations of simply typed languages by generating binding constructions and equations among them.
To each pair consisting of a binding signature and a set of equations, they associate a category of \enquote{models}, 
and they give a monadicity result which implies that this category has an initial object, which is the language presented by the pair.

In the present work, we propose, for the untyped setting, a variant of their approach where monads and modules over them are the central notions.
More precisely, we study, for monads over sets, presentations by generating (`higher-order') operations and equations among them.
We consider a notion of 2-signature which allows to specify a \emph{monad} with a family of binding operations subject to a family of equations, 
as is the case for the paradigmatic example of the lambda calculus, specified by its two standard constructions (application and abstraction) subject to $\beta$- and $\eta$-equalities.
Such a 2-signature is hence a pair ($\Sigma$,E) of a binding signature $\Sigma$ and a family $E$ of equations for $\Sigma$.
This notion of 2-signature has been introduced earlier by Ahrens in a slightly different context. 

We associate, to each 2-signature $(\Sigma,E)$, a category of \enquote{models of $(\Sigma,E)$}; and we say that a 2-signature is \enquote{effective} if this category has an initial object;
the monad underlying this (essentially unique) object is the \enquote{monad specified by the 2-signature}.
Not every 2-signature is effective; we identify a class of 2-signatures, which we call \enquote{algebraic}, that are effective.

Importantly, our 2-signatures together with their models enjoy \enquote{modularity}:
when we glue (algebraic) 2-signatures together, their initial models are glued accordingly.

We provide a computer formalization for our main results.

\end{abstract}

\section{Introduction}
The present work is devoted to the study of presentations of monads on the category of sets.  More precisely, there is a well established theory of presentations of monads through generating (first-order) operations equipped with relations among the corresponding derived operations.
Here we propose a counterpart of this theory, where we consider generation of monads by \emph{binding operations}. 
Various algebraic structures generated by binding operations have been considered by many,
going back at least to Fiore, Plotkin, and Turi \cite{FPT}, Gabbay and Pitts \cite{gabbay_pitts99}, and Hofmann \cite{hofmann}.
Every such operation has a binding arity, which is a sequence of non-negative integers. 
For example,
the binding arity of the application operation of the lambda calculus is
$(0,0)$: it takes two arguments without binding any variable in them, while
 the abstraction operation on the monad of the lambda calculus
 has binding arity $(1)$,
as it binds one variable in its single argument. 
For each family $\Sigma$ of binding arities, there is a generated \enquote{free} monad $\hat{\Sigma}$ on $\Set$ which maps a set of free variables $X$ to the set of terms $\hat{\Sigma}(X)$ taking variables in $X$.

If $p : \hat{\Sigma} \to R$ is a monad epimorphism, we understand that $R$ is generated by a family of operations whose binding arities are given by $\Sigma$, subject to suitable identifications. 
In particular, for $\Sigma := ((0,0), (1))$, $\hat{\Sigma}$ may be understood as the monad $\LC$ of syntactic terms of the lambda calculus, and we have an obvious epimorphism $p : \hat{\Sigma} \to \LCb$, where $\LCb$  is the monad of lambda-terms modulo $\beta$ and $\eta$.
In order to manage such equalities, the approach in the first-order case suggests to 
identify $p$ as the coequalizer of a double arrow from $T$ to $\hat{\Sigma}$ where $T$ is again a \enquote{free} monad. 
Let us see what comes out when we attempt to find such an encoding for the $\beta$-equality of the monad $\LCb$. It should say that for each set $X$, the following two maps from $\hat{\Sigma}(X+\{*\})\times\hat{\Sigma}(X)$ to $\hat{\Sigma}(X)$,
\begin{itemize}
    \item $(t,u) \mapsto \app(\abs(t),u)$
    \item $(t,u)\mapsto t[*\mapsto u]$
\end{itemize}
are equal. Here a problem occurs, namely that the above collections of maps, which can be understood as a morphism of functors, cannot be understood as a morphism of monads. Notably, they do not send variables to variables.  

On the other hand, we observe that the members of our equations, which are not morphisms of monads, commute with substitution, and hence are more than morphisms of functors: indeed they are morphisms of \emph{modules over $\hat{\Sigma}$}. (In Section~\ref{sec:modules}, we recall briefly what modules over a monad are.) 
Accordingly, a (second-order)  presentation for a monad $R$ could be a diagram
\begin{equation}
\xymatrix{
 T \ar@<-.5ex>[r] \ar@<.5ex>[r]^f & \hat{\Sigma} \ar[r]^p & R
 }
 \label{eq:intro-coeq-diag}
\end{equation}
where $\Sigma$ is a binding signature, $\hat{\Sigma}$ is the associated free monad, $T$ is a module over $\hat{\Sigma}$, $f$ is a pair of morphisms of modules over $\hat{\Sigma}$, and $p$ is a monad epimorphism. 
And now we are faced with the task of finding a condition meaning something like \enquote{$p$ is the coequalizer of $f$}\footnote{This cannot be the case stricto sensu since $f$ is a pair of morphisms of modules while $p$ is a morphism of monads.}. 

 To this end, we introduce the category $\Mon^\Sigma$ \enquote{of models of $\Sigma$}, whose objects are monads \enquote{equipped with an action of $\Sigma$}.
Of course $\hat{\Sigma}$ is equipped with such an action which turns it into the 
initial object. Next, we define the full subcategory  of
models satisfying the equation $f$, and require $R$ to be the initial object
therein. Our definition is suited for the case where the equation $f$ is
parametric in the model: this means that now $T$ and $f$ are functions of the
model $S$, and $f(S)=(u(S) , v(S))$ is a pair of $S$-module morphisms from $T(S)$ to $S$.
We say that $S$ satisfies the equation $f$ if $u(S)=v(S)$. 
Generalizing the case of one equation to the case of a family of equations yields the notion of 2-signature already introduced by Ahrens \cite{ahrens_relmonads} in a slightly different context.

Now we are ready to formulate our main problem: given a 2-signature
$(\Sigma,E)$, where $E$ is a family of parametric equations as above,
does the subcategory of models of $\Sigma$ satisfying the family of equations $E$ admit an initial object?

We answer positively for a large subclass of 2-signatures which
we call \emph{algebraic} 2-signatures (see Theorem~\ref{thm:alg-elem-2-sigs-are-representable}).

This provides a construction of a monad from an algebraic 2-signature, and  we prove furthermore (see Theorem~\ref{thm:modularity}) that this construction is \emph{modular}, in the sense that merging two extensions of 2-signatures corresponds to building an amalgamated sum of initial models. This is 
analogous to our previous result for 1-signatures  shown in \cite[Thm.~32]{ahrens_et_al:LIPIcs:2018:9671}.

As expected, our initiality property generates a recursion principle which is a recipe allowing us
to specify a morphism from the presented monad to any given other monad.

We give various examples of monads arising \enquote{in nature} that can be specified 
via an algebraic 2-signature (see Section~\ref{sec:examples-higher-order-theories}), and we also show through a simple example how our recursion principle applies (see Section~\ref{sec:recursion}).

\subparagraph*{Computer-checked formalization}

 This work is accompanied by a computer-checked formalization of the main results,
 based on 
 the formalization of a previous work \cite{ahrens_et_al:LIPIcs:2018:9671}.
 We work over the \textsf{UniMath} library \cite{UniMath},
 which is implemented in the proof assistant \textsf{Coq} \cite{Coq:manual}.
 The formalization consists of about 9,500 lines of code, and can be consulted on \url{\codebaseurl}.
A guide is given in the \href{\codebaseurl /blob/\coqhash/README.md}{README}, and 
a summary of our formalization is available at \url{\coqdocurlhtml{SoftEquations.Summary}}.

For the purpose of this article, we refer to a fixed version of our library,
with the short hash \href{\codebaseurl /tree/\coqlonghash}{\coqhash}.
This version compiles with version \href{https://github.com/UniMath/UniMath/tree/b16841738c2593f66834eb8f0ca79b4b667d4ed6}{b168417}
of \UniMath. 

Throughout the article, statements are accompanied by their corresponding identifiers in the formalization.
These identifiers are also hyperlinks to the online documentation stored at \url{\coqdocbasebaseurl/index.html}.

\subparagraph*{Related work}
The present work follows a previous work of ours
\cite{ahrens_et_al:LIPIcs:2018:9671} where we study
a slightly different kind of presentation of monads.
Specifically, in \cite{ahrens_et_al:LIPIcs:2018:9671}, we treat a class
of 1-signatures which can be understood as quotients of algebraic 1-signatures.
This should amount to considering a specific kind of equations, as suggested
in Section~\ref{s:col-alg-sigs}, where we recover, in the current setting,
all the examples given there.

In \cite{ahrens_et_al:LIPIcs:2018:9671}, we discussed related work on the general topic of monads and syntax.
Let us focus here on related work on presentations of 
languages (or monads).

In an abstract setting, \cite{KP} explains how any finitary monad can be presented as a coequalizer of free monads. There, free monads correspond to our initial models of an algebraic 1-signature without any binding construction.

In \cite{DBLP:conf/csl/FioreH10}, the authors introduce a notion of equation based on syntax with \emph{meta-variables}:
essentially, a specific syntax, say, $T := T(M,X)$ considered there depends on two contexts: a meta-context $M$, and an object-context $X$.
The terms of the actual syntax are then those terms $t \in T(\emptyset, X)$ in an empty meta-context.
An equation for $T$ is, simply speaking, a pair of terms in the same pair of contexts. Transferring an equation to any model of the 
underlying algebraic 1-signature  is done by induction on the 
syntax with meta-variables. The authors show a monadicity theorem which straightforwardly implies an initiality result very
similar to ours.

As said before, Ahrens \cite{ahrens_relmonads} introduces the notion of
2-signature which we consider here,  in the slightly different context of
(relative) monads on preordered sets, where the preorder models the reduction
relation. In some sense, our result tackles the technical issue of quotienting
the initial (relative) monad constructed in \cite{ahrens_relmonads} by the preorder.

\section{Categories of modules over monads}
\label{sec:modules}

In this section, we recall the notions of monad and module over a monad,
as well as some constructions of modules. 
We restrict our attention to the category $\Set$ of sets, although most definitions are straightforwardly generalizable.
See \cite{Hirschowitz-Maggesi-2010} for a more extensive introduction.

A \textbf{monad} (over $\Set$) is a triple
$R=(R,\mu,\eta)$ given by a functor $R\colon\Set \rar
\Set$, and two natural transformations $\mu\colon R\cdot R \rar R$ and
$\eta\colon I \rar R$ such that the well-known monadic laws hold.
A \textbf{monad morphism} to another such monad $(R',\mu',\eta')$ is a natural transformation
$f : R \to R'$ that commutes with the monadic structure. 
The category of monads is denoted by $\Mon$.

Let $R$ be a monad. A \textbf{(left) $R$-module}%
\footnote{The analogous notion of \emph{right} $R$-module is not used in this work, we hence
simply write \enquote{$R$-module} instead of \enquote{left $R$-module} for brevity.}
is given by a functor $M\colon \Set \rar \Set$
  equipped with a natural transformation $\rho\colon M \cdot R \rar
  M$, called \emph{module substitution}, which is compatible with the monad
  composition and identity:
  \begin{equation*}
    \rho \circ \rho R = \rho \circ M \mu, \qquad
    \rho \circ M\eta = 1_M.
  \end{equation*}

\noindent
 Let $f\colon R\rar S$ be a morphism of monads and $M$ an $S$-module.
  The module substitution
  \begin{math}
    M\cdot R \stackrel{Mf}\rar M \cdot S \stackrel\rho\rar M
  \end{math}
  turns $M$ into an $R$-module $f^*\!M$, called \textbf{pullback of $M$ along $f$}.

A natural transformation of $R$-modules $\varphi\colon M \rar N$ is
  \textbf{linear} if it is compatible with module substitution on either side,
  that is, if $\varphi \circ \rho^M = \rho^N \circ \varphi R$.
Modules over $R$ and their morphisms form a category denoted $\Mod(R)$,
which is complete and cocomplete: limits and colimits are computed pointwise.

We define the \textbf{total module category} $\LMod$ as follows:
its objects are pairs $(R, M)$ of a monad $R$ and an $R$-module $M$.
A morphism from $(R, M)$ to $(S, N)$ is a pair $(f, m)$ where
    $f\colon R \rar S$ is a morphism of monads, and $m\colon M \rar
    f^*N$ is a morphism of $R$-modules.  
  The category $\LMod$ comes
    equipped with a forgetful functor to the category of monads, given
    by the projection $(R,M) \mapsto R$. 
    This functor is a Grothendieck fibration with fibers $\Mod(R)$ over $R$.
In particular, any monad morphism $f : R \rar S$ gives rise to a functor
  $f^*\colon \Mod(S) \rar \Mod(R)$
  which preserves limits and colimits.

We give some important examples of modules:
\begin{example}
  \label{ex:modules}
  \begin{enumerate}
  \item Every monad $R$ is a module over itself, which we call the
    \textbf{tautological module}.
  \item For any functor $F\colon \Set \rar \Set$ and any $R$-module $M\colon
    \Set \rar \Set$, the composition $F\cdot M$ is an $R$-module (in the evident
    way).
  \item For every set $W$ we denote by $\underbar W \colon \Set \rar \Set$ the
    constant functor $\underbar W := X \mapsto W$. Then $\underbar W$ is
    trivially an $R$-module since $\underbar W = \underbar W \cdot R$.
  \item Given an $R$-module $M$, the $R$-module $M'$ is defined, on objects, as
    $M'(X) := M(X+\{*\})$, and the obvious module substitution. Derivation
    yields an endofunctor on $\Mod(R)$ that is right adjoint to the functor $M
    \mapsto M\times R$, \enquote{product with the tautological module}.
    Details are given, e.g., in \cite[Sec.\ 2.3]{ahrens_et_al:LIPIcs:2018:9671}.
    \label{ex:deriv}
  \item Derivation can be iterated. Given a list of non negative integers
    $(a)=(a_1,\dots,a_n)$ and a left module $M$ over a monad $R$, we denote by
    $M^{(a)}=M^{(a_1,\dots,a_n)}$ the module $M^{(a_1)}\times\cdots\times
    M^{(a_n)}$, with $M^{()}=1$ the final module.
  \end{enumerate}
\end{example}

\section{1-signatures and their models}
\label{sec:1-signatures}

In this section, we review the notion of 1-signature studied in detail in 
\cite{ahrens_et_al:LIPIcs:2018:9671} --- there only called \enquote{signature}.

  A \textbf{1-signature} is a section of the forgetful functor
  from the category $\LMod$ to the category $\Mon$.
   A \textbf{morphism between two 1-signatures} $\Sigma_1, \Sigma_2 \colon \Mon \rar \LMod$
  is a natural transformation $m \colon \Sigma_1 \rar \Sigma_2$ which, post-composed
  with the projection $\LMod \rar \Mon$, is the identity.
  The category of 1-signatures is denoted by $\oneSig$.

  Limits and colimits of 1-signatures can be easily constructed pointwise:
  the category of 1-signatures is complete and cocomplete.

Table~\ref{table:1-sigs} lists important examples of 1-signatures.
\begin{table}[htb]
\begin{tabular}{lll}
Hypotheses & On objects & Name of the 1-signature \\ \hline 
&$R \mapsto R$ & $\Theta$\\
$\Sigma$ 1-signature, $F$ functor &$R \mapsto F \cdot \Sigma(R)$ & $F \cdot \Sigma$\\
&$R \mapsto 1_R$ & $1$\\
$\Sigma$, $\Psi$ 1-signatures &$R \mapsto \Sigma(R)\times \Psi(R)$ & $\Sigma \times \Psi$\\
$\Sigma$, $\Psi$ 1-signatures &$R \mapsto \Sigma(R) + \Psi(R)$ & $\Sigma + \Psi$ \\
                                    &$R \mapsto R'$ &       $\Theta'$\\
$n \in \NN$                       &$R \mapsto R^{(n)}$ &  $\Theta^{(n)}$ \\
$(a)=(a_1,\dots,a_n) \in \NN^n$   &$R \mapsto R^{(a)} = R^{(a_1)} \times \ldots \times R^{(a_n)}$ &  $\Theta^{(a)}$ \text{\textbf{elementary signatures}}
\end{tabular}
\caption{Examples of 1-signatures}
\label{table:1-sigs}
\end{table}
   An \textbf{algebraic 1-signature} is a (possibly infinite) coproduct of elementary signatures (defined in Table~\ref{table:1-sigs}).
  For instance, the algebraic 1-signature of the lambda calculus is $\Sigma_\LC = \Theta^2 + \Theta'$.

  Given a monad $R$ over $\Set$, we define an
  \textbf{action of the 1-signature $\Sigma$ in $R$} to be a module morphism from
  $\Sigma(R)$ to $R$.
  For example, the application $\app\colon \LC^2 \rar \LC$ is an action of the elementary 1-signature
  $\Theta^2$ into the monad $\LC$ of syntactic lambda calculus.
  The abstraction $\abs\colon \LC' \rar \LC$ is an action of the elementary 1-signature
  $\Theta'$ into the monad $\LC$. 
  Then $[\app, \abs] : \LC^2 + \LC' \rar \LC$ is an action of the algebraic 1-signature
  of the lambda-calculus $\Theta^2 + \Theta'$ into the monad $\LC$.

  Given a  1-signature $\Sigma$, we build the category
  $\Mon^{\Sigma}$ of \textbf{models of $\Sigma$} as follows. Its
  objects are pairs $(R,r)$ of a monad $R$ equipped with an
  action $r : \Sigma(R) \to R$ of $\Sigma$.  A morphism  from $(R, r)$ to $(S, s)$ is a
  morphism of monads $m : R \to S$ making the following diagram of
  $R$-modules commutes:
  \[
    \xymatrix{
      **[l] \Sigma(R) \ar[r]^{r}\ar[d]_{\Sigma(m)} & **[r] R \ar[d]^m \\
      **[l] m^* (\Sigma(S)) \ar[r]_{m^*s} & **[r] m^* S}
  \]

  Let $f\colon \Sigma \rar \Psi$ be a morphism of 1-signatures
     and $\mathcal{R}=(R,r)$ a model of $\Psi$.
  The linear morphism
  \begin{math}
    \Sigma(R)\stackrel{f(R)}\rar\Psi(R)\stackrel{r}\rar R
  \end{math}
  defines an action of $\Sigma$ in $R$. The induced model of $\Sigma$ is called
  \textbf{pullback}
   of $\mathcal{R}$ along
  $f$ and noted $f^*\!\mathcal{R}$.

The \textbf{total category} $\WRep$ of models is defined as follows:
\begin{itemize}
\item An object of $\WRep$ is a triple $(\Sigma, R, r)$ where $\Sigma$ is a 1-signature, 
 $R$ is a monad, and $r$ is an action of $\Sigma$
  in $R$.
\item A morphism in $\WRep$ from $(\Sigma_1, R_1, r_1)$ to $(\Sigma_2, R_2, r_2)$ 
  consists of a pair $(i,m)$ of a 1-signature morphism $i: \Sigma_1 \rar \Sigma_2$
  and a morphism $m$ of $\Sigma_1$-models from $(R_1, r_1)$
  to $(R_2, i^*(r_2))$.
\end{itemize}
The forgetful functor $\WRep \to \Sig$ is a Grothendieck fibration.

Given a 1-signature $\Sigma$,
 the initial object in $\Mon^\Sigma$, if it exists, is denoted by
 $\hat{\Sigma}$.
  In this case, 
   the 1-signature $\Sigma$ is said
   \textbf{effective}%
    \footnote{In our previous work~\cite{ahrens_et_al:LIPIcs:2018:9671},
    we call \textbf{representable} any 1-signature $\Sigma$ that has an initial
    model,
    called a \textbf{representation} of $\Sigma$, or \textbf{syntax generated
    by} $\Sigma$.}.

\begin{theorem}[{\cite[Theorems 1 and 2]{HM}}]
  \label{t:alg-sig-representable}
  Algebraic 1-signatures are effective.
\end{theorem}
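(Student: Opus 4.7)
The plan is to build the initial model $\hat{\Sigma}$ explicitly as a monad of first-order syntactic terms over the binding arities encoded by $\Sigma$. Write the algebraic 1-signature as $\Sigma = \coprod_{i \in I} \Theta^{(a_i)}$, with $a_i = (a_i^1, \dots, a_i^{n_i})$. For each set $X$, I would inductively define $\hat{\Sigma}(X)$ by the clauses: (i) every $x \in X$ is a term; (ii) for each $i \in I$ and each tuple $(t_1, \dots, t_{n_i})$ with $t_j \in \hat{\Sigma}(X + [a_i^j])$ (where $[a]$ denotes a fixed $a$-element set used for bound variables), the formal expression $\mathrm{op}_i(t_1, \dots, t_{n_i})$ is a term in $\hat{\Sigma}(X)$. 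Functoriality in $X$ amounts to capture-avoiding renaming, defined by structural recursion.

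Next I would give $\hat{\Sigma}$ a monad structure. The unit is the variable embedding $X \hookrightarrow \hat{\Sigma}(X)$. Substitution $\hat{\Sigma}(X) \times \Set(X, \hat{\Sigma}(Y)) \to \hat{\Sigma}(Y)$ is defined by recursion on the first argument, extending a substitution $f \colon X \to \hat{\Sigma}(Y)$ to $X + [a] \to \hat{\Sigma}(Y + [a])$ through each binder using the inclusion $[a] \hookrightarrow Y + [a]$; the three monad laws follow by structural induction. By construction, each $\mathrm{op}_i$ yields a natural transformation $\hat{\Sigma}^{(a_i)} \to \hat{\Sigma}$ which commutes with substitution, hence is a module morphism, and copairing these gives an action $\Sigma(\hat{\Sigma}) \to \hat{\Sigma}$, so that $\hat{\Sigma}$ is an object of $\Mon^\Sigma$.

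Finally, given any model $(R, r)$ with component actions $r_i \colon R^{(a_i)} \to R$, I would define $\varphi \colon \hat{\Sigma} \to R$ by recursion: a variable $x \in X$ is sent to $\eta^R_X(x)$, and $\mathrm{op}_i(t_1, \dots, t_{n_i})$ is sent to $r_i$ applied componentwise to $\varphi(t_j)$, read as an element of the corresponding derivation of $R$. One then proves by structural induction that $\varphi$ is natural, commutes with substitution (hence is a monad morphism), and respects the actions. Uniqueness is forced: the variable clause is fixed by the monad morphism law, and the operation clause is fixed by the model-morphism square.

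The main obstacle is the interlocking induction needed to verify that $\varphi$ is a monad morphism: the recursive step for $\mathrm{op}_i$ involves substitution under binders, and showing that this is transported correctly to $R$ requires the substitution lemma for $\hat{\Sigma}$ together with the module-morphism property of each $r_i$, proved simultaneously. Once substitution has been shown to commute with $\varphi$ on the operation constructors, commutation with the actions is immediate from the recursive definition, and initiality follows.
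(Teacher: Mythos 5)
Your proposal is correct and takes essentially the same route as the proof this paper delegates to Hirschowitz--Maggesi: construct $\hat{\Sigma}$ as the inductively defined term monad in nested-datatype style (binders realized by the summands $X + [a]$, so renaming is plain structural recursion and capture is impossible), equip it with substitution by structural recursion, and derive initiality by recursion on terms, with the substitution lemma proved by the same interlocking induction you describe. The only point worth making explicit is that extending $f \colon X \to \hat{\Sigma}(Y)$ through a binder also uses the weakening $\hat{\Sigma}(Y) \to \hat{\Sigma}(Y + [a])$ on the values of $f$, a standard step your induction already accommodates.
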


\section{2-Signatures and their models}
\label{sec:2-sigs}

In this section we study \emph{2-signatures} and \emph{models of 2-signatures}.
A 2-signature is a pair of a 1-signature and a family of \emph{equations} over it.

\subsection{Equations}

Our equations are those of Ahrens \cite{ahrens_relmonads}, namely they are 
parallel module morphisms parametrized by the models of the underlying 1-signature.
The underlying notion of 1-model is essentially the same as in \cite{ahrens_relmonads}, even if, there, such equations are interpreted instead as \emph{inequalities}.

Throughout this subsection, we fix a 1-signature $\Sigma$, that we instantiate in the examples.

\begin{definition}
  We define a \textbf{$\Sigma$-module} to be a functor $T$ from the category of
  models of $\Sigma$ to the category $\LMod$ commuting with
  the forgetful functors to the category $\Mon$ of monads,
 \[
  \begin{xy}
   \xymatrix{
                    \Mon^{\Sigma} \ar[rd] \ar[rr]^{T}  &        &   \LMod \ar[dl] \
                    \\
                                     & \Mon
   }
  \end{xy}
 \]

\end{definition}

\begin{example}
\label{ex:sigma-mod-from-1-sig}

To each 1-signature $\Psi$ is associated, by precomposition with the
projection from $\Mon^\Sigma$ to $\Mon$, a $\Sigma$-module still
denoted $\Psi$.
All the $\Sigma$-modules occurring in this work arise in this way from 1-signatures;
in other words, they do not depend on the action of the 1-model.
In particular, we have the 
\textbf{tautological 
$\Sigma$-module $\Theta$}, and,
more generally, for any natural number $n \in \NN$, a
$\Sigma$-module $\Theta^{(n)}$. Also we have another fundamental $\Sigma$-module (arising in this way from) $\Sigma$ itself. 

\end{example}

\begin{definition}
  Let $S$ and $T$ be $\Sigma$-modules.
  We define a \textbf{morphism of $\Sigma$-modules} from $S$ to $T$ to be a natural
  transformation from $S$ to $T$ which becomes the identity when postcomposed with the
  forgetful functor from the category of models of $\Sigma$ to the category of monads. 
\end{definition}  

\begin{example}
  \label{ex:sigma-mor-of-sig-mor}
  Each 1-signature morphism $\Psi\to\Phi$ upgrades into
  a morphism of $\Sigma$-modules.
  Further in that vein, there is a
  morphism of $\Sigma$-modules  $\tau^{\Sigma} : \Sigma \to \Theta$.
  It is given, on a model $(R,m)$ of $\Sigma$, by 
  $m : \Sigma(R) \to R$. (Note that it does not arise from a morphism of 1-signatures.)
  When the context is clear, we write simply $\tau$ for this morphism, and call it the \textbf{tautological morphism of $\Sigma$-modules}.
\end{example}

\begin{proposition}
Our $\Sigma$-modules and their morphisms, with the obvious composition and identity,
 form a category.
\end{proposition}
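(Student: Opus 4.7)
The plan is to verify directly that the claimed data define a category by exhibiting $\Sigma$-modules and their morphisms as (not necessarily full) subcategory of a functor category, so all categorical laws are inherited.

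First, I would observe that $\Sigma$-modules are, by definition, certain functors from $\Mon^{\Sigma}$ to $\LMod$, and morphisms of $\Sigma$-modules are certain natural transformations between them. Hence the ambient structure is the functor category $[\Mon^{\Sigma}, \LMod]$. Composition and identity of $\Sigma$-module morphisms are inherited from this functor category, so associativity and the two unit laws are automatic provided that (i) identity natural transformations and (ii) composites of $\Sigma$-module morphisms satisfy the required compatibility with the forgetful functor $U : \LMod \to \Mon$.

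Next I would check these two closure properties. Write $\pi : \Mon^{\Sigma} \to \Mon$ for the projection. By the defining commutation, every $\Sigma$-module $T$ satisfies $U\circ T = \pi$. For any $\Sigma$-module $T$, the identity natural transformation $\mathrm{id}_T$ whiskered with $U$ yields $U\,\mathrm{id}_T = \mathrm{id}_{U\circ T} = \mathrm{id}_\pi$, so $\mathrm{id}_T$ is a morphism of $\Sigma$-modules. For composition, if $\alpha : S \to T$ and $\beta : T \to V$ are morphisms of $\Sigma$-modules, then by functoriality of whiskering $U(\beta\circ\alpha) = (U\beta)\circ(U\alpha) = \mathrm{id}_\pi \circ \mathrm{id}_\pi = \mathrm{id}_\pi$, so $\beta \circ \alpha$ is again a morphism of $\Sigma$-modules.

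With identities and composites in hand, I would conclude by noting that the associativity and unit laws for $\Sigma$-module morphisms are the very equations satisfied in $[\Mon^{\Sigma}, \LMod]$, and hence hold on the nose. There is essentially no obstacle here, since whiskering by $U$ is a functor from the relevant hom-sets of natural transformations to those over $\pi$, and $\Sigma$-module morphisms are precisely the preimage of the identity; such a preimage is automatically closed under composition and contains identities, giving a subcategory.
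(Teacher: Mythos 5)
Your proof is correct, and it is essentially the argument the paper has in mind: the paper states this proposition without proof, treating as routine exactly the verification you carry out, namely that $\Sigma$-modules and their morphisms sit inside the functor category $[\Mon^{\Sigma},\LMod]$ as the fiber over $\pi$ (respectively $\mathrm{id}_\pi$) of postcomposition with $U:\LMod\to\Mon$, which contains identities and is closed under composition, so all category laws are inherited. Nothing is missing; your explicit whiskering computations $U\,\mathrm{id}_T=\mathrm{id}_\pi$ and $U(\beta\circ\alpha)=(U\beta)\circ(U\alpha)=\mathrm{id}_\pi$ are precisely the \enquote{obvious} checks the paper elides.
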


\begin{definition}
  We define a $\Sigma$-equation 
  to be a pair of
  parallel morphisms of $\Sigma$-modules.  We also
  write $e_1=e_2$ for the $\Sigma$-equation $e = (e_1, e_2)$.
\end{definition}

\begin{example}[Commutativity of a binary operation]\label{ex:half-equation-mult-swap}
Here we instantiate our fixed 1-signature as follows: $\Sigma := \Theta \times \Theta$. In this case, we say that $\tau$ is the (tautological) binary operation.
Now we can formulate the usual law of commutativity for this binary operation.
 
 We consider the morphism of 1-signatures $\swap : \Theta^2 \rar \Theta^2$ that exchanges the two components of the direct product.
 Again by Example~\ref{ex:sigma-mor-of-sig-mor}, we have an induced morphism of $\Sigma$-modules, still denoted $\swap$.
 
    Then, the $\Sigma$-equation for commutativity is
    given by the two morphisms of $\Sigma$-modules
    \begin{equation*}
      \xymatrix@R=2pt{
        \Theta^2 \ar[r]^{\swap} &
        \Theta^2 \ar[r]^-{\tau} &
        \Theta \\
        \Theta^2 \ar[rr]_{\tau} &
        {} &
        \Theta}
    \end{equation*}
    See also Section~\ref{sec:example-monoids} where we explain in detail the case of monoids.
\end{example}

For the example of the lambda calculus with $\beta$- and $\eta$-equality (given in Example~\ref{ex:lcbe}),
we need to introduce \emph{currying}:

\begin{definition}
  \label{def:half-eq-curry}
  By abstracting over the base monad $R$
  the adjunction in the category of $R$-modules of Example~\ref{ex:modules},
  item~\ref{ex:deriv}, we can perform \textbf{currying} of morphisms of 1-signatures: given a morphism of signatures $\Sigma_1 \times \Theta \to \Sigma_2$ it produces a new morphism $\Sigma_1 \to \Sigma_2'$.
  By Example~\ref{ex:sigma-mod-from-1-sig}, currying acts also on morphisms of $\Sigma$-modules.

  Conversely, given  a morphism of 1-signatures (resp. $\Sigma$-modules)
  $\Sigma_1 \to \Sigma_2'$, we can define the \textbf{uncurryied} map $\Sigma_1
  \times \Theta \to \Sigma_2$.
  \end{definition}

\begin{example}[$\beta$- and $\eta$-conversions]\label{ex:lcbe}
  Here we instantiate our fixed 1-signature as follows: 
  $\Sigma_\LC := \Theta\times \Theta + \Theta'$. This is the 
  1-signature
  of the lambda calculus. We break the tautological $\Sigma$-module morphism into its two pieces, namely $\app := \tau \circ \inl : \Theta\times \Theta \rar \Theta$ and 
  $\abs := \tau \circ \inr : \Theta' \rar \Theta $.
  Applying currying to
  $\app$
  yields the morphism 
  $\app_1 : \Theta \rar \Theta'$
  of $\Sigma_\LC$-modules.
  The usual $\beta$ and $\eta$ relations are implemented in our
  formalism by two $\Sigma_\LC$-equations that we call $e_\beta$ and $e_\eta$ respectively:
  \begin{equation*}
    e_\beta :
    \vcenter{\vbox{
  \xymatrix@R=2pt{
    \Theta' \ar[r]^{\abs} &\Theta \ar[r]^{\app_1} &\Theta' \\
    \Theta' \ar[rr]_{1} && \Theta'}}}
  \qquad\text{and}\qquad
  e_\eta :
  \vcenter{\vbox{
  \xymatrix@R=2pt{
    \Theta \ar[r]^{\app_1} &\Theta' \ar[r]^{\abs} &\Theta \\
    \Theta \ar[rr]_{1} && \Theta}}}
\end{equation*}
\end{example}

\subsection{2-signatures and their models}

\begin{definition}
  A \textbf{2-signature} is a pair $(\Sigma,E)$ of a 1-signature $\Sigma$ and a
  family $E$ of $\Sigma$-equations. 
\end{definition}

\begin{example}\label{ex:2-sig-comm-mult}

  The 2-signature for a commutative binary operation is 
  $(\Theta^2, \tau \circ \swap = \tau)$ (cf.\ Example~\ref{ex:half-equation-mult-swap}).
\end{example}

\begin{example}\label{ex:2-sig-lcbe}
  The 2-signature of the lambda calculus modulo $\beta$- and $\eta$-equality is 
  $\Upsilon_{\LCb}=(\Theta \times \Theta + \Theta', \{e_\beta,e_\eta\})$,
  where $e_\beta,e_\eta$ are the $\Sigma_\LC$-equations defined in Example~\ref{ex:lcbe}.
\end{example}

\begin{definition}[\coqident{SoftEquations.Equation}{satisfies_equation}]
  We say that a model $M$ of $\Sigma$ 
  \textbf{satisfies the $\Sigma$-equation $e = (e_1, e_2$)} if $e_{1}(M) = e_{2}(M)$.
  If $E$ is a family of $\Sigma$-equations, we say that a
  model $M$ of $\Sigma$ \textbf{satisfies $E$} if $M$ satisfies each
  $\Sigma$-equation in $E$.
\end{definition}
\begin{definition}
  Given a monad $R$ and a 2-signature $\Upsilon=(\Sigma,E)$, an \textbf{action
  of $\Upsilon$ in $R$} is an action of $\Sigma$ in $R$ such that the induced
  1-model satisfies all the equations in $E$.
\end{definition}

\begin{definition}[\coqident{SoftEquations.Equation}{precategory_model_equations}]
  For a 2-signature
  $(\Sigma,E)$, we define the \textbf{category $\Mon^{(\Sigma,E)}$ of
  models of $(\Sigma, E)$} to be the full subcategory of the
  category of models of $\Sigma$ whose objects are
  models of $\Sigma$ satisfying $E$, or equivalently, monads 
  equipped with an action of $(\Sigma , E)$.
\end{definition}

\begin{example}
\label{ex:2sig-lcbeta}
 A model of the 2-signature $\Upsilon_{\LCb}=(\Theta\times\Theta + \Theta',\{e_\beta,e_\eta\})$ is given by a model
 $(R,\app^R : R\times R \to R, \abs^R : R' \to R)$ of the 1-signature $\Sigma_\LC$
 such that $\app^R_1 \cdot \abs^R = 1_{R'}$ and $\abs^R \cdot \app^R_1
 = 1_R$ (see Example~\ref{ex:lcbe}).
\end{example}

\begin{definition}
  A 2-signature $(\Sigma,E)$ is said to be
  \textbf{effective}
  if its category of models $\Mon^{(\Sigma,E)}$ has an initial object,
  denoted $\widehat{(\Sigma,E)}$.
\end{definition}

In Section~\ref{sec:initiality}, we aim to find sufficient conditions for a
2-signature $(\Sigma,E)$ to be effective.

\subsection{Modularity for 2-signatures}

In this section, we define the category $\twoSig$ of 2-signatures and
the category $\twoMod$ of models of 2-signatures, together with
functors that relate them with the categories of 1-signatures and
1-models.
The situation is summarized in the commutative diagram of functors
  \[
   \begin{xy}
        \xymatrix@C=4pc@R=5pc{        
                   **[l] \twoMod \rtwocell<4>_{F_\Mod}^{U_\Mod}{'\top} \ar[d]_{2\pi} & **[r] \Mod \ar[d]^{\pi}
                \\
                   **[l] \twoSig \rtwocell<4>_{F_\Sig}^{U_\Sig}{'\top} & **[r] \Sig 
                   }
       \end{xy} \enspace 
\]
where
 \begin{itemize}
  \item $2\pi$ is a Grothendieck fibration;
  \item $\pi$ is the Grothendieck fibration defined in \cite[Section~5.2]{ahrens_et_al:LIPIcs:2018:9671};
  \item $U_\Sig$ is a coreflection and preserves colimits; and
  \item $U_\Mod$ is a coreflection.
 \end{itemize}
As a simple consequence of this data, we obtain a \emph{modularity} result in Theorem~\ref{thm:modularity}:
it explains how the initial model of an amalgamated sum of 2-signatures is the 
amalgamation of the initial model of the summands.

We start by defining the category $\twoSig$ of 2-signatures:
\begin{definition}[\coqident{SoftEquations.CatOfTwoSignatures}{TwoSignature_category}]
 Given 2-signatures $(\Sigma_1,E_1)$ and $(\Sigma_2,E_2)$, a \textbf{morphism of 2-signatures from $(\Sigma_1,E_1)$ to $(\Sigma_2,E_2)$} is
 a morphism of 1-signatures $m : \Sigma_1 \to \Sigma_2$ such that
 for any model $M$ of $\Sigma_2$ satisfying $E_2$, the $\Sigma_1$-model $m^*M$ satisfies $E_1$.

 These morphisms, together with composition and identity inherited from 1-signatures, form the category $\twoSig$.
\end{definition}

We now study the existence of colimits in $\twoSig$. 
We know that $\Sig$ is cocomplete, and we use this knowledge
in our study of $\twoSig$, by relating the two categories:

Let $F_\Sig : \Sig \to \twoSig$ be the functor which associates to any 1-signature $\Sigma$ the empty family of equations,
$F_\Sig(\Sigma) := (\Sigma, \emptyset)$.
Call $U_\Sig : \twoSig \to \Sig$ the forgetful functor defined on objects as $U(\Sigma, E) := \Sigma$.
\begin{lemma}
 [\coqident{SoftEquations.CatOfTwoSignatures}{TwoSignature_To_One_right_adjoint}, \coqident{SoftEquations.CatOfTwoSignatures}{OneSig_to_TwoSig_fully_faithful}]
 The forgetful functor $U_\Sig$ is a coreflection and is right adjoint to $F_\Sig$.
\end{lemma}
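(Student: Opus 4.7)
The approach is to unpack the definitions and observe that (i) morphisms of 2-signatures out of $F_\Sig(\Sigma)=(\Sigma,\emptyset)$ carry no constraint beyond that on the underlying 1-signature morphism, so that the hom-set $\twoSig(F_\Sig(\Sigma),(\Psi,E))$ coincides with $\Sig(\Sigma,\Psi)=\Sig(\Sigma,U_\Sig(\Psi,E))$, and (ii) this identification is natural and has identity as unit, so $F_\Sig$ is fully faithful.

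First, fix a 1-signature $\Sigma$ and a 2-signature $(\Psi,E)$. By the definition of morphisms in $\twoSig$, an arrow $(\Sigma,\emptyset)\to(\Psi,E)$ is a 1-signature morphism $m:\Sigma\to\Psi$ subject to the requirement that, for every $\Psi$-model $M$ satisfying $E$, the $\Sigma$-model $m^{*}M$ satisfies every equation in the empty family. The latter is vacuously true for any $m$. Hence the assignment $m\mapsto m$ yields a bijection
\[
  \twoSig(F_\Sig(\Sigma),(\Psi,E)) \;\cong\; \Sig(\Sigma,\Psi) \;=\; \Sig(\Sigma,U_\Sig(\Psi,E)).
\]
Naturality in $\Sigma$ and in $(\Psi,E)$ is immediate, since composition and identities in $\twoSig$ are inherited from those in $\Sig$ on underlying 1-signature morphisms. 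This establishes the adjunction $F_\Sig \dashv U_\Sig$.

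To conclude that we have a coreflection, I check that the unit $\eta$ is an isomorphism, equivalently that $F_\Sig$ is fully faithful. Since $U_\Sig F_\Sig(\Sigma)=\Sigma$, the component $\eta_\Sigma$ is the image of $\mathrm{id}_{(\Sigma,\emptyset)}$ under the bijection above, so $\eta_\Sigma=\mathrm{id}_\Sigma$; hence $\eta$ is literally the identity natural transformation and $F_\Sig$ is fully faithful. There is no serious obstacle in this proof: the whole argument is a direct unfolding of definitions, resting only on the vacuity of the empty family of equations together with the fact that morphisms of $\twoSig$ and $\Sig$ agree on underlying data.
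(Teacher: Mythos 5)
Your proof is correct and is precisely the direct definitional argument: the paper itself gives no in-text proof for this lemma (it points to the Coq formalization, whose identifier \texttt{OneSig\_to\_TwoSig\_fully\_faithful} confirms that the content is the full faithfulness of $F_\Sig$ together with the adjunction $F_\Sig \dashv U_\Sig$), and the formalized proof amounts to exactly your two observations. The vacuity of the empty family of equations gives the hom-set bijection $\twoSig(F_\Sig(\Sigma),(\Psi,E)) \cong \Sig(\Sigma,U_\Sig(\Psi,E))$, and since $U_\Sig F_\Sig = \mathrm{id}$ with identity unit, $F_\Sig$ is fully faithful, making $U_\Sig$ a coreflection as claimed.
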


We are interested in specifying new languages by \enquote{gluing together}
simpler ones.
On the level of 2-signatures, this is done by taking the coproduct, or, more generally,
the pushout of 2-signatures:

\begin{theorem}
 [\coqident{SoftEquations.CatOfTwoSignatures}{TwoSignature_PushoutsSET}]
 \label{thm:2sig-pushouts}
 The category $\twoSig$ has
  pushouts.
\end{theorem}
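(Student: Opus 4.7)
The plan is to build the pushout of $(\Sigma_0, E_0) \xrightarrow{m_1} (\Sigma_1, E_1)$ and $(\Sigma_0, E_0) \xrightarrow{m_2} (\Sigma_2, E_2)$ by first taking the pushout of the underlying 1-signatures and then equipping it with a suitable family of equations. Since the category $\Sig$ is cocomplete, form the pushout $\Sigma_3$ of $\Sigma_1 \xleftarrow{m_1} \Sigma_0 \xrightarrow{m_2} \Sigma_2$ in $\Sig$, with injections $i_1 : \Sigma_1 \to \Sigma_3$ and $i_2 : \Sigma_2 \to \Sigma_3$. The candidate pushout in $\twoSig$ is then $(\Sigma_3, E_3)$ where $E_3$ is obtained by transporting $E_1$ and $E_2$ along $i_1$ and $i_2$ respectively.

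The key auxiliary construction is the push-forward of equations along a 1-signature morphism. Given $j : \Sigma \to \Sigma'$, the pullback functor $j^* : \Mon^{\Sigma'} \to \Mon^{\Sigma}$ commutes with the forgetful functors to $\Mon$; hence, precomposition with $j^*$ turns any $\Sigma$-module $T : \Mon^{\Sigma} \to \LMod$ into a $\Sigma'$-module $T \circ j^*$, and every morphism of $\Sigma$-modules into a morphism of $\Sigma'$-modules. Thus each $\Sigma$-equation $(e_1, e_2)$ gives rise to a $\Sigma'$-equation $j_*(e_1, e_2) := (e_1 \circ j^*, e_2 \circ j^*)$, and by construction a model $M$ of $\Sigma'$ satisfies $j_*(e)$ if and only if $j^*M$ satisfies $e$. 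Now set $E_3 := (i_1)_* E_1 \sqcup (i_2)_* E_2$.

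It remains to verify the universal property. First, $i_1$ is a morphism of 2-signatures from $(\Sigma_1, E_1)$ to $(\Sigma_3, E_3)$: any model of $(\Sigma_3, E_3)$ satisfies $(i_1)_* E_1$, hence its pullback along $i_1$ satisfies $E_1$; symmetrically for $i_2$. Commutativity of the square in $\twoSig$ reduces to commutativity of the underlying diagram in $\Sig$, which holds by construction of $\Sigma_3$. For the universal property, suppose $(\Sigma_4, E_4)$ is a 2-signature with morphisms $f_k : (\Sigma_k, E_k) \to (\Sigma_4, E_4)$ for $k = 1, 2$ that agree on $(\Sigma_0, E_0)$. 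By the universal property of $\Sigma_3$ in $\Sig$, there is a unique 1-signature morphism $f : \Sigma_3 \to \Sigma_4$ with $f \circ i_k = f_k$. To conclude, we must show $f$ is a morphism of 2-signatures, i.e.\ that $f^*M$ satisfies $E_3$ for every model $M$ of $(\Sigma_4, E_4)$. But $(i_k)^* f^* M = f_k^* M$, and the latter satisfies $E_k$ because $f_k$ is a 2-signature morphism; by the characterization of push-forward, $f^*M$ then satisfies $(i_k)_* E_k$ for $k=1,2$, hence all of $E_3$.

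The main obstacle I expect is purely bureaucratic: making the push-forward of equations functorial and compatible with composition of 1-signature morphisms, so that the equality $(i_k)^* f^* = f_k^*$ used in the last step holds on the nose (and similarly that the agreement $f_1 \circ m_1 = f_2 \circ m_2$ transfers to the level of pulled-back equations). Everything else is dictated by the universal property of the underlying pushout in $\Sig$ and the definition of morphisms in $\twoSig$.
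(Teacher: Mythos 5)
Your proposal is correct and matches the paper's construction: the paper obtains pushouts of 2-signatures by combining coproducts (coproduct of underlying 1-signatures, with the union of the equations transported along the injections) and coequalizers (coequalizer of underlying 1-signatures, keeping the transported equations of the codomain), which composes to exactly your direct construction of $(\Sigma_3, E_3)$ with $E_3 = (i_1)_*E_1 \sqcup (i_2)_*E_2$. Your push-forward of equations via precomposition with the pullback functor $j^*$, and the on-the-nose identity $(f \circ i_k)^* = i_k^* \circ f^*$ (which holds strictly since pullback just precomposes the action), are precisely the ingredients used in the formalized proof.
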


Coproducts are computed by taking the union of the equations and the coproducts of the underlying 1-signatures.
Coequalizers are computed by keeping the equations of the codomain and taking the
coequalizer of the underlying 1-signatures.  Thus, by decomposing any colimit
into coequalizers and coproducts, we have this more general result:
\begin{proposition}
  The category $\twoSig$  is cocomplete and $U_\Sig$ preserves colimits.
\end{proposition}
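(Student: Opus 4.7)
The plan is to use the standard fact that a category with all small coproducts and all coequalizers is cocomplete, and to decompose an arbitrary colimit as a coequalizer of two maps between coproducts. I will construct coproducts and coequalizers in $\twoSig$ explicitly, reading off in each case that $U_\Sig$ preserves them. Once this is done, cocompleteness of $\twoSig$ and preservation of all colimits by $U_\Sig$ follow immediately.

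For a family $\{(\Sigma_i, E_i)\}_{i \in I}$, I would form $\Sigma := \coprod_i \Sigma_i$ in $\Sig$ with injections $\iota_i : \Sigma_i \to \Sigma$, and transport each equation $e \in E_i$ along $\iota_i$ into a $\Sigma$-equation; here I use that any $\Sigma$-model $M$ pulls back along $\iota_i$ to a $\Sigma_i$-model, and that morphisms of $\Sigma_i$-modules upgrade to morphisms of $\Sigma$-modules. The candidate coproduct is $(\Sigma, E)$ where $E$ is the union of these transported families. A morphism $(\Sigma, E) \to (\Sigma', E')$ in $\twoSig$ unpacks, via the universal property in $\Sig$, into an $I$-indexed family of 1-signature morphisms $\Sigma_i \to \Sigma'$, and the equational side-condition splits cleanly into the conditions defining a morphism $(\Sigma_i, E_i) \to (\Sigma', E')$ for each $i$. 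For a parallel pair $f, g : (\Sigma_1, E_1) \rightrightarrows (\Sigma_2, E_2)$, I would take the coequalizer $q : \Sigma_2 \to C$ in $\Sig$ and equip $C$ with the equations of $E_2$ pushed along $q$; a model of the resulting 2-signature is exactly a $C$-model whose $\Sigma_2$-reindexing satisfies $E_2$, so the universal property reduces to that of the 1-signature coequalizer. In both cases, $U_\Sig$ sends the colimit I have constructed to the coproduct (respectively coequalizer) of the underlying 1-signatures by design.

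Having coproducts and coequalizers in $\twoSig$ and the fact that $U_\Sig$ preserves both, cocompleteness of $\twoSig$ follows from the standard presentation of a colimit as a coequalizer of two maps between coproducts, and the same decomposition shows that $U_\Sig$ preserves all small colimits. The main technical point requiring care is the claim that equations transport along a morphism of 1-signatures in a way compatible with pullback of models: one must check that a $\Sigma_2$-model $M$ satisfies the pushed-forward equation $m_* e$ precisely when the pulled-back $\Sigma_1$-model $m^* M$ satisfies $e$. This is the same bookkeeping already implicit in the construction of pushouts in Theorem~\ref{thm:2sig-pushouts}, and it amounts to the functoriality in $\Sigma$ of the category of $\Sigma$-modules together with the compatibility of model pullback with module-morphism pullback.
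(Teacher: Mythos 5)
Your proof is correct and takes essentially the same route as the paper, which likewise computes coproducts of 2-signatures as coproducts of the underlying 1-signatures with the union of the (transported) equations, computes coequalizers by keeping the codomain's equations over the coequalizer of the underlying 1-signatures, and obtains general colimits and preservation by $U_\Sig$ from the standard decomposition into coproducts and coequalizers. Your explicit verification that pushing an equation forward along a 1-signature morphism $m$ is characterized by \enquote{$M$ satisfies $m_* e$ iff $m^* M$ satisfies $e$} merely spells out bookkeeping the paper leaves implicit.
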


We now turn to our modularity result, which states that
the initial model of a coproduct of two 2-signatures is the coproduct of the 
initial models of each 2-signature. 
More generally, the two languages can be amalgamated along a common
\enquote{core language}, by considering a pushout rather than a coproduct.

For a precise statement of that result, we define a \enquote{total category of models of 2-signatures}:
\begin{definition}
\label{d:2mod}
 The category $\InttwoMod$, or $\twoMod$ for short, has, as objects, pairs $((\Sigma, E), M)$ of a 2-signature $(\Sigma,E)$ and
  a model $M$ of $(\Sigma,E)$.

 A morphism from $((\Sigma_1, E_1), M_1)$ to $((\Sigma_2, E_2), M_2)$ is a pair $(m,f)$ consisting of
  a morphism $m : (\Sigma_1, E_1) \to (\Sigma_2, E_2)$ of 2-signatures and
  a morphism $f : M_1 \to m^* M_2$ of $(\Sigma_1,E_1)$-models (or, equivalently, of $\Sigma_1$-models).
\end{definition}

This category of models of 2-signatures contains the models of 1-signatures as a coreflective subcategory. 
Let $F_\Mod : \Mod \to \twoMod$ be the functor which associates to any 
1-model $(\Sigma , M)$ the empty family of equations,
$F_\Mod(\Sigma,M) := (F_\Sig(\Sigma), M)$.
Conversely, the forgetful functor $U_\Mod : \twoMod \to \Mod$ 
maps $((\Sigma,E),M)$ to $(\Sigma,M)$.

\begin{lemma}
 [\coqident{SoftEquations.CatOfTwoSignatures}{TwoMod_To_One_right_adjoint}, \coqident{SoftEquations.CatOfTwoSignatures}{OneMod_to_TwoMod_fully_faithful}]
 We have $F_\Mod \dashv U_\Mod$. Furthermore, $U_\Mod$ is a coreflection.
\end{lemma}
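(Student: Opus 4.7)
The plan is to mimic, one level up, the proof of the analogous lemma for signatures: because $F_\Mod$ equips the input 1-model with the empty family of equations, every condition about equations is vacuously satisfied, so all of the content of a morphism in $\twoMod$ out of $F_\Mod(\Sigma,M)$ lies already in $\Mod$.

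For the adjunction $F_\Mod \dashv U_\Mod$, I would exhibit a natural bijection
\[
\twoMod\bigl(F_\Mod(\Sigma_1,M_1),\,((\Sigma_2,E_2),M_2)\bigr) \;\cong\; \Mod\bigl((\Sigma_1,M_1),\,U_\Mod((\Sigma_2,E_2),M_2)\bigr).
\]
Unfolding Definition~\ref{d:2mod}, the left-hand side consists of pairs $(m,f)$ with $m \colon (\Sigma_1,\emptyset) \to (\Sigma_2,E_2)$ a morphism of 2-signatures and $f \colon M_1 \to m^*M_2$ a morphism of $\Sigma_1$-models. The 2-signature condition on $m$ requires that for every model $M$ of $\Sigma_2$ satisfying $E_2$, the pullback $m^*M$ satisfies the empty family $\emptyset$; this is vacuous. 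Hence $m$ is simply a morphism of 1-signatures $\Sigma_1 \to \Sigma_2$, and $(m,f)$ is precisely an element of the right-hand side, where $U_\Mod((\Sigma_2,E_2),M_2)=(\Sigma_2,M_2)$. The bijection is the identity on the underlying pairs, so naturality in both variables is immediate from naturality of the corresponding bijection for 1-signatures (and in fact reduces to it under the commutative square with the projections $2\pi$ and $\pi$). The unit and counit can then be read off as the obvious identity-carried morphisms.

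For the coreflection claim, by the standard characterisation it suffices to show that $F_\Mod$ is fully faithful, or equivalently that the unit $\eta \colon \mathrm{Id}_\Mod \Rightarrow U_\Mod F_\Mod$ is an isomorphism. Since $U_\Mod F_\Mod(\Sigma,M) = U_\Mod((\Sigma,\emptyset),M) = (\Sigma,M)$, the unit is the identity, hence trivially an isomorphism. Alternatively, one can reread the bijection above with $E_2 = \emptyset$: morphisms $F_\Mod(\Sigma_1,M_1) \to F_\Mod(\Sigma_2,M_2)$ are in bijection with morphisms $(\Sigma_1,M_1) \to (\Sigma_2,M_2)$ in $\Mod$, witnessing full faithfulness.

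There is essentially no obstacle; the only point that might require a brief justification is that the 2-signature condition is indeed vacuous when the source carries no equations, but this is immediate from the definition of \coqident{SoftEquations.Equation}{satisfies_equation} applied to the empty family. The whole argument is parallel to the already-stated lemma about $F_\Sig \dashv U_\Sig$, with the additional module datum $f$ carried along unchanged by both $F_\Mod$ and $U_\Mod$.
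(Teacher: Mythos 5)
Your proposal is correct and takes essentially the same route as the paper: the paper gives no prose proof, delegating to the formalization, whose content is precisely your argument --- the 2-signature morphism condition is vacuous when the source carries the empty equation family, so hom-sets out of $F_\Mod(\Sigma_1,M_1)$ in $\twoMod$ coincide (identically, hence naturally) with the corresponding hom-sets in $\Mod$, and the identity unit shows $F_\Mod$ is fully faithful, i.e., $U_\Mod$ is a coreflection. Nothing is missing.
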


The modularity result is a consequence of the following technical result:

\begin{proposition}
 [\coqident{SoftEquations.CatOfTwoSignatures}{two_mod_cleaving}]
 The forgetful functor $2\pi:\twoMod \to \twoSig$ is a Grothendieck fibration.
\end{proposition}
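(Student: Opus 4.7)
The plan is to construct cartesian lifts in $\twoMod$ by transporting cartesian lifts from the fibration $\pi : \Mod \to \Sig$ (which is already known to be a Grothendieck fibration by \cite[Section~5.2]{ahrens_et_al:LIPIcs:2018:9671}), and then to verify that the construction stays inside the full subcategory carved out by the equations.

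More precisely, fix a morphism of 2-signatures $m : (\Sigma_1, E_1) \to (\Sigma_2, E_2)$ and an object $((\Sigma_2, E_2), M_2)$ of $\twoMod$. Since $m$ is in particular a morphism of 1-signatures and $M_2$ is in particular a model of $\Sigma_2$, the fibration $\pi$ supplies a cartesian lift
\[
  (m, \mathrm{id}) : (\Sigma_1, m^* M_2) \longrightarrow (\Sigma_2, M_2) \quad \text{in } \Mod,
\]
where $m^* M_2$ is the pullback 1-model described in Section~\ref{sec:1-signatures}. My candidate cartesian lift in $\twoMod$ is the same pair $(m, \mathrm{id})$, viewed with source $((\Sigma_1, E_1), m^* M_2)$.

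The first step is to show that this source indeed lies in $\twoMod$, i.e.\ that the 1-model $m^* M_2$ satisfies the equations $E_1$. This is immediate from the very definition of a morphism of 2-signatures: $m : (\Sigma_1, E_1) \to (\Sigma_2, E_2)$ is required to send any model of $(\Sigma_2, E_2)$ to a model of $(\Sigma_1, E_1)$ under pullback.

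The second step is to verify the universal property of a cartesian lift in $\twoMod$. Suppose $((\Sigma, E), M)$ is an object of $\twoMod$, $k : (\Sigma, E) \to (\Sigma_1, E_1)$ is a morphism of 2-signatures, and $(m \circ k, g) : ((\Sigma, E), M) \to ((\Sigma_2, E_2), M_2)$ is a morphism of $\twoMod$ factoring through $(m,\mathrm{id})$ on the base. Because $\Mon^{(\Sigma_i, E_i)}$ is a \emph{full} subcategory of $\Mon^{\Sigma_i}$, morphisms in $\twoMod$ over a given 2-signature morphism coincide with morphisms in $\Mod$ over the underlying 1-signature morphism, provided both endpoints satisfy their respective equations. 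The cartesian property in $\Mod$ then supplies a unique $h : M \to (m\circ k)^* M_2 = k^*(m^* M_2)$ of $\Sigma$-models such that $(m,\mathrm{id}) \circ (k,h) = (m\circ k, g)$; this $h$ is automatically a morphism of $(\Sigma, E)$-models, and uniqueness is inherited from $\Mod$.

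The main (minor) obstacle is just the bookkeeping between the two fibrations: one has to check that the full-subcategory inclusions on each fibre are compatible with the cartesian structure, so that cartesianness in $\Mod$ implies cartesianness in $\twoMod$. Once this compatibility is noted, the result follows essentially for free from the analogous property of $\pi$ together with the defining property of morphisms of 2-signatures.
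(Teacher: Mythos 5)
Your proof is correct and matches the paper's approach: the paper gives no prose proof, but its formalized proof (\nolinkurl{two_mod_cleaving}) likewise inherits the cleaving from the fibration $\pi : \Mod \to \Sig$, with the defining condition on 2-signature morphisms guaranteeing that the pullback model $m^* M_2$ satisfies $E_1$, and fullness of the fibres $\Mon^{(\Sigma,E)} \subseteq \Mon^{\Sigma}$ transferring the cartesian universal property. Your identification of these two points as the only real content of the proof is exactly right.
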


The \emph{modularity result} below is analogous to the modularity result for 1-signatures \cite[Thm.\ 32]{ahrens_et_al:LIPIcs:2018:9671}:

\begin{theorem}[Modularity for 2-signatures,
\coqident{SoftEquations.Modularity}{pushout_in_big_rep}]
\label{thm:modularity}
 Suppose we have a pushout diagram of effective 2-signatures, as on the left below.
 This pushout gives rise to a commutative square of morphisms of models in
 $\twoMod$ as on the right below,
 where we only write the second components, omitting the (morphisms of) signatures.
 This square is a pushout square.
 \[
  \xymatrix{
    \Upsilon_0 \ar[r]\ar[d] & \Upsilon_1 \ar[d]\\
    \Upsilon_2 \ar[r] & \Upsilon \pushout
    }
   \qquad
   \xymatrix{
    \widehat{\Upsilon}_0 \ar[r]\ar[d] & \widehat{\Upsilon}_1 \ar[d]\\
    \widehat{\Upsilon}_2 \ar[r] & \widehat{\Upsilon} \pushout
    }
\]
\end{theorem}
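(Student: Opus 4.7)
The plan is to exploit the Grothendieck fibration structure of $2\pi : \twoMod \to \twoSig$ together with the pushout property of $\twoSig$. The argument mirrors exactly the proof of the analogous modularity result for 1-signatures from \cite[Thm.~32]{ahrens_et_al:LIPIcs:2018:9671}, with the Grothendieck fibration $2\pi$ playing the role of $\pi$.

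First I would construct the commutative square on the right. For each morphism $\Upsilon_i \to \Upsilon_j$ in the pushout diagram on the left, the fibration property gives a pullback functor between fibers sending $\widehat{\Upsilon}_j$ to a $\Upsilon_i$-model. By initiality of $\widehat{\Upsilon}_i$ in the fiber over $\Upsilon_i$, there is a unique $\Upsilon_i$-model morphism from $\widehat{\Upsilon}_i$ to this pullback; via the cartesian lift this corresponds to a unique morphism in $\twoMod$ over the given map of 2-signatures. Commutativity of the resulting square is immediate from the uniqueness of the morphism out of $\widehat{\Upsilon}_0$ (initial in the fiber over $\Upsilon_0$).

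For the pushout property, take any cocone in $\twoMod$ with apex $(\Upsilon', N)$, consisting of morphisms $(g_i, n_i) : (\Upsilon_i, \widehat{\Upsilon}_i) \to (\Upsilon', N)$ for $i=1,2$ which agree after precomposition with the morphisms from $(\Upsilon_0, \widehat{\Upsilon}_0)$. The first components $g_i$ form a cocone in $\twoSig$ over the pushout diagram on the left, so by Theorem~\ref{thm:2sig-pushouts} they induce a unique $g : \Upsilon \to \Upsilon'$. To lift $g$ to a morphism in $\twoMod$, I would observe that such a lift is equivalently a $\Upsilon$-model morphism $\widehat{\Upsilon} \to g^*N$ in the fiber over $\Upsilon$, and by effectivity of $\Upsilon$ there is a unique such morphism $n$. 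Compatibility with the cocone, i.e.\ that $(g, n)$ precomposed with the canonical map $(\Upsilon_i, \widehat{\Upsilon}_i) \to (\Upsilon, \widehat{\Upsilon})$ recovers $(g_i, n_i)$, follows again from initiality: both the induced composite and the given $n_i$ are $\Upsilon_i$-model morphisms from $\widehat{\Upsilon}_i$ (initial in its fiber) to $g_i^* N$, hence coincide. Uniqueness of $(g, n)$ is also a consequence of initiality: $g$ is pinned down by the pushout property in $\twoSig$, and then $n$ is pinned down by initiality of $\widehat{\Upsilon}$ in its fiber.

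The main thing to be careful about is the bookkeeping in the fibered setting: each time one passes between a morphism in $\twoMod$ lying over $\phi$ and a morphism in the fiber over the source, one must apply the functor $\phi^{*}$ on the target. Beyond this, the argument is essentially formal, and there is no deeper obstacle: everything follows from the abstract fact that in a Grothendieck fibration whose fibers admit initial objects over each base object in a diagram, the fiberwise initial objects assemble into a colimit cocone in the total category whenever the base diagram is itself a colimit.
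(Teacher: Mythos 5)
Your proposal is correct and takes essentially the same route as the paper, which derives Theorem~\ref{thm:modularity} exactly as \enquote{a simple consequence} of the fact that $2\pi : \twoMod \to \twoSig$ is a Grothendieck fibration whose fibers over the four (effective) 2-signatures have initial objects --- precisely the abstract fact you state at the end, mirroring the proof of the 1-signature modularity result. One cosmetic slip: the unique $g : \Upsilon \to \Upsilon'$ is induced by the universal property of the given pushout square, not by Theorem~\ref{thm:2sig-pushouts}, which only asserts that pushouts exist.
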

Intuitively, the 2-signatures $\Upsilon_1$ and $\Upsilon_2$ specify two extensions of 
the 2-signature $\Upsilon_0$, and $\Upsilon$ is the smallest extension containing both these extensions.
By Theorem~\ref{thm:modularity} the initial model of $\Upsilon$
is the \enquote{smallest model containing both
the languages generated by $\Upsilon_1$ and $\Upsilon_2$}.

\subsection{Initial Semantics for 2-Signatures}
\label{sec:initiality}

We now turn to the problem of constructing the initial model of a
2-signature $(\Sigma,E)$.
More specifically, we identify sufficient conditions for $(\Sigma,E)$
to admit an initial object $\widehat{(\Sigma,E)}$ in
the category of models.
Our approach is very straightforward: we seek to construct $\widehat{(\Sigma,E)}$ 
by applying a suitable quotient construction to the initial object $\hat\Sigma$ of $\Mon^\Sigma$.

This leads immediately to our first requirement on $(\Sigma,E)$, 
which is that $\Sigma$ must be an effective 1-signature.  
(For instance, we can assume that $\Sigma$ is an algebraic 1-signature, see Theorem \ref{t:alg-sig-representable}.)  
This is a very natural hypothesis, since in the case where $E$ is the empty
family of $\Sigma$-equations, it is obviously a necessary and
sufficient condition.

Some $\Sigma$-equations are never satisfied.  In
that case, the category $\Mon^{(\Sigma,E)}$ is empty.
  For example, given any 1-signature $\Sigma$, consider the $\Sigma$-equation 
  $\inl,\inr : \Theta \rightrightarrows \Theta + \Theta$ given by the 
  left and right inclusion.
  This is obviously an unsatisfiable $\Sigma$-equation.
We have to find suitable hypotheses to rule out such unsatisfiable
$\Sigma$-equations.  This  motivates the notion of \emph{elementary} equations.

\begin{definition}
  Given a 1-signature $\Sigma$, a $\Sigma$-module $S$ is \textbf{nice}
   if $S$ sends pointwise epimorphic
  $\Sigma$-model morphisms to pointwise epimorphic module morphisms.
\end{definition}

\begin{definition}[\coqident{SoftEquations.quotientequation}{elementary_equation}]
  Given a 1-signature $\Sigma$,
  an \textbf{elementary $\Sigma$-equation} is a $\Sigma$-equation such that 
  \begin{itemize}
    \item the target is a finite derivative of 
          the tautological 2-signature $\Theta$, i.e., of the form $\Theta^{(n)}$ for some $n \in \NN$, and 
   \item the source is a nice $\Sigma$-module.
  \end{itemize}
\end{definition}
\begin{example}
  \label{ex:elementary-algebraic}
  Any algebraic 1-signature is nice
  \cite[Example 43]{ahrens_et_al:LIPIcs:2018:9671}.
  Thus, any $\Sigma$-equation between an algebraic 1-signature and $\Theta^{(n)}$, for some
  natural number $n$, is elementary.
\end{example}

\begin{definition}
  A 2-signature $(\Sigma,E)$ is said \textbf{algebraic}
  if
  $\Sigma$ is algebraic and $E$ is a family of
  elementary equations.
\end{definition}

\begin{theorem}
 [\coqident{SoftEquations.AdjunctionEquationRep}{elementary_equations_on_alg_preserve_initiality}]
 \label{thm:alg-elem-2-sigs-are-representable}
 Any algebraic 2-signature has an initial model.
\end{theorem}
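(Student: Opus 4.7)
The plan is to realise $\widehat{(\Sigma,E)}$ as a monadic quotient of the initial model $\hat{\Sigma}$ of $\Sigma$, which exists by Theorem~\ref{t:alg-sig-representable} since $\Sigma$ is algebraic. For each equation $e=(e_1,e_2)\in E$ with source a nice $\Sigma$-module $S$ and target $\Theta^{(n_e)}$, evaluation at $\hat{\Sigma}$ gives a parallel pair of $\hat{\Sigma}$-module morphisms $e_1(\hat{\Sigma}),e_2(\hat{\Sigma}): S(\hat{\Sigma}) \rightrightarrows \hat{\Sigma}^{(n_e)}$. I would form the smallest monadic congruence $\sim$ on $\hat{\Sigma}$, compatible with the $\Sigma$-action, that for each $e \in E$ identifies the images of $e_1(\hat{\Sigma})$ and $e_2(\hat{\Sigma})$ pointwise in $\hat{\Sigma}^{(n_e)}$. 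Such a congruence exists in $\Set$ as an intersection of congruences, and yields a monad $Q := \hat{\Sigma}/{\sim}$ equipped with a pointwise surjective monad morphism $q : \hat{\Sigma} \to Q$. Because $\Sigma$ itself is nice (Example~\ref{ex:elementary-algebraic}), $\Sigma(q)$ is pointwise epimorphic, so the action of $\Sigma$ on $\hat{\Sigma}$ coequalises the appropriate pairs and descends uniquely to an action $r_Q : \Sigma(Q) \to Q$, turning $q$ into a morphism in $\Mon^\Sigma$.

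The key verification is that $Q$ satisfies every equation in $E$. Naturality of the $\Sigma$-module morphisms $e_i$ applied to the $\Sigma$-model morphism $q$ yields, for $i \in \{1,2\}$, the identity
\[
  q^{(n_e)} \circ e_i(\hat{\Sigma}) \;=\; e_i(Q) \circ S(q).
\]
The two left-hand sides agree by construction of $\sim$. Since $S$ is nice, $S(q)$ is a pointwise epimorphism; and since $q$ is pointwise surjective, so is $q^{(n_e)}$, because derivation preserves epimorphisms in $\Set$. Cancelling the epimorphism $S(q)$ on the right gives $e_1(Q) = e_2(Q)$, so $Q \in \Mon^{(\Sigma,E)}$.

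For initiality, given any $(\Sigma,E)$-model $(N,r_N)$, the unique $\Sigma$-model morphism $u : \hat{\Sigma} \to N$ equates every pair of elements of the form $(e_1(\hat{\Sigma})(t),e_2(\hat{\Sigma})(t))$, again by naturality and because $N$ satisfies $E$; hence $u$ factors uniquely through $q$, and uniqueness of the factorisation is inherited from $q$ being epi. The main obstacle is the bookkeeping around the quotient: one must simultaneously make $\sim$ compatible with the monad structure and with the $\Sigma$-action, and then invoke niceness at precisely the right spot to cancel $S(q)$. The elementarity hypothesis is calibrated exactly for this cancellation - a non-nice source would block the descent of the equation to $Q$, and a target that is not a finite derivative of $\Theta$ would not allow the identification to be expressed as a monad congruence in the first place.
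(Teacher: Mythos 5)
Your proof is correct in substance, but it takes a genuinely different route from the paper's. The paper (Lemma~\ref{lem:hoat-initial}, via Propositions~\ref{prop:quotient-monad} and~\ref{prop:soft-quotient}) quotients $\hat\Sigma$ \emph{from above}: it takes the family $(f_i)$ of \emph{all} $\Sigma$-model morphisms from $\hat\Sigma$ into models satisfying $E$, and declares $x \sim y$ iff $f_i(x)=f_i(y)$ for every $i$. With that definition, factorization of the initial morphism through the quotient is immediate by construction, and the real work is to show that the quotient itself satisfies $E$; the paper does this pointwise, pushing a pair $e_{1,\hat\Sigma}(x), e_{2,\hat\Sigma}(x)$ along each $f_j$ by naturality and using that each target $S_j$ satisfies the equation. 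Your construction goes \emph{from below}: the smallest monadic, $\Sigma$-compatible congruence forcing the equations. There, satisfaction of $E$ in the quotient is nearly free---your cancellation of the epimorphism $S(q)$ after naturality is exactly the intended use of niceness, the same mechanism as in the paper but deployed at a different spot---while the work migrates to showing that every morphism $u\colon \hat\Sigma \to N$ into a $2$-model kills the generated congruence. You gloss this last step, but it is sound: the kernel of $u$ is itself a monadic, $\Sigma$-compatible congruence containing the generating pairs (by naturality of the $e_i$ and satisfaction of $E$ in $N$), so it contains $\sim$ by minimality. The two quotients are of course isomorphic, both being initial; the paper's version has the advantage of making the universal property definitional, yours of making the equations definitional.

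One point you leave implicit, which the paper isolates as hypotheses~\ref{thm-cond-iniepi} and~\ref{thm-cond-epipw} of Lemma~\ref{lem:hoat-initial}: for $Q=\hat\Sigma/{\sim}$ to carry a monad multiplication, and for the module and action structure to descend, one needs the functors $\hat\Sigma$ and $\Sigma(\hat\Sigma)$ to preserve epimorphisms (e.g.\ $\mu^Q$ is obtained by descending along the composite $q_{Q(X)} \circ \hat\Sigma(q_X)$, which is epi only if $\hat\Sigma(q_X)$ is). Under the axiom of choice this is automatic, since every epi in $\Set$ splits; the paper's formalization avoids choice by proving these preservation properties directly for algebraic $1$-signatures. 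Relatedly, where you invoke niceness of $\Sigma$ to conclude that $\Sigma(q)$ is epi, what is actually needed is the paper's condition~\ref{thm-cond-epimon} that $\Sigma$ sends epimorphic natural transformations to epimorphic ones: niceness, as defined, applies to $\Sigma$-\emph{model} morphisms, and at the moment you descend the action, $q$ is not yet a model morphism. Both properties hold for algebraic $\Sigma$, so nothing breaks, but the distinction matters for the choice-free bookkeeping you yourself flag as the main obstacle.
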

The proof of Theorem~\ref{thm:alg-elem-2-sigs-are-representable} is given in Section~\ref{secappendix:proof-main}.

\begin{example}
\label{ex:lcbe2}
The 2-signature of lambda calculus modulo $\beta$ and $\eta$ equations given in Example~\ref{ex:2-sig-lcbe} is algebraic.
Its initial model is precisely the monad $\LCb$ of lambda calculus modulo $\beta\eta$ equations.

The instantiation of the formalized Theorem~\ref{thm:alg-elem-2-sigs-are-representable} to this 2-signature is done in \coqident{SoftEquations.Examples.LCBetaEta}{LCBetaEta}%
\footnote{%
An initiality result for this particular case was also previously discussed and proved formally in the Coq proof assistant in \cite{Hirschowitz-Maggesi-2010}.}.

\end{example}

Let us mention finally that, using the axiom of choice, we can take a similar quotient 
 on all the 1-models of $\Sigma$:
\begin{proposition}
  [\coqident{SoftEquations.AdjunctionEquationRep}{forget_2model_is_right_adjoint},
  \coqident{SoftEquations.Equation}{forget_2model_fully_faithful}]
  Here we assume the axiom of choice. 
 The forgetful functor from the category $\Mon^{(\Sigma,E)}$ of 2-models 
 of $(\Sigma,E)$ to the category $\Mon^{\Sigma}$ of $\Sigma$-models has a left adjoint.
 Moreover, the left adjoint is a reflector.
\end{proposition}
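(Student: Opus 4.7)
The plan is to exploit that $\Mon^{(\Sigma,E)}$ is, by its very definition, the \emph{full} subcategory of $\Mon^{\Sigma}$ consisting of those $\Sigma$-models that happen to satisfy the equations in $E$. Hence the forgetful functor $U : \Mon^{(\Sigma,E)} \to \Mon^{\Sigma}$ is fully faithful, and the ``reflector'' clause of the proposition is automatic as soon as a left adjoint $L$ is produced: the counit of the adjunction $L \dashv U$ will then necessarily be a natural isomorphism. So the real task is, for each $\Sigma$-model $M$, to construct a universal $(\Sigma,E)$-model $L(M)$ equipped with a $\Sigma$-model morphism $\eta_M : M \to L(M)$ through which every $\Sigma$-model morphism $M \to N$, with $N$ satisfying $E$, factors uniquely.

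For the construction, I would adapt the quotient argument used for the initial $\Sigma$-model in Section~\ref{secappendix:proof-main} to the arbitrary base $M$. Concretely, let $L(M)$ be the quotient of $M$ by the smallest $\Sigma$-model congruence $\sim$ such that $M/{\sim}$, with its induced $\Sigma$-model structure, satisfies every equation in $E$. The family of such congruences is closed under arbitrary intersection, so, provided it is inhabited and we can form the intersection, a smallest $\sim$ exists. This is where the axiom of choice enters: one uses it to bound the cardinality of the possible targets of epimorphisms out of $M$, so that the class of candidate quotients of $M$ with $(\Sigma,E)$-model target is set-small, making the intersection (equivalently, a suitable colimit in $\Mon^{\Sigma}$) legitimate.

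The universal property of $\eta_M : M \to L(M)$ is then formal: any $\Sigma$-model morphism $f : M \to N$ with $N$ a model of $(\Sigma,E)$ gives, by factoring through its regular image, a congruence on $M$ forcing the equations, which therefore contains $\sim$; so $f$ factors uniquely through $\eta_M$. Functoriality of $L$ and the triangle identities then follow routinely from the universal property.

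The main obstacle is the parametric nature of the equations: since $e_1(L(M)), e_2(L(M))$ are evaluated \emph{on} $L(M)$, the defining condition on $\sim$ is circular, and one must resolve it by a fixed-point argument (or by the intersection argument above, once one verifies closure under intersection of the relevant congruences). One also has to check that the $\Sigma$-action on $M$ descends to $L(M)$; this relies on the same preservation-of-epimorphisms phenomenon that underlies the proof of Theorem~\ref{thm:alg-elem-2-sigs-are-representable}, and I would reuse the machinery developed there rather than redo it from scratch.
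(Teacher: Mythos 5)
Your overall strategy coincides with the paper's intended proof: there, the left adjoint is obtained by applying the quotient construction of Proposition~\ref{prop:soft-quotient} to an arbitrary $1$-model $M$, taking for $(f_i)$ the family of \emph{all} $\Sigma$-model morphisms from $M$ into models satisfying $E$ --- exactly the argument of Lemma~\ref{lem:hoat-initial} with $\hat\Sigma$ replaced by $M$, the unit being the projection $p^M$, and, as you correctly note, fullness of $\Mon^{(\Sigma,E)}$ in $\Mon^{\Sigma}$ making the reflector clause automatic. Your smallest-congruence formulation is equivalent to this: the simultaneous kernel of the family $(f_i)$ \emph{is} the intersection of all model congruences whose quotient satisfies $E$ (the kernel of any $f_i$ is such a congruence, via the image of $f_i$, and conversely each such congruence is the kernel of its projection). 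In particular, the circularity you worry about, and the fixed-point argument you propose to resolve it, are unnecessary detours.

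There are, however, two genuine inaccuracies. First, you misplace the axiom of choice. It is not needed for smallness: a congruence on $M$ is a subfunctor of $M \times M$, so the relevant congruences form a set without any choice, and even the class-indexed family of all $f_i$ induces a small pointwise relation, as in Proposition~\ref{prop:quotient-monad}. Where AC is indispensable is in descending the monad multiplication and the $\Sigma$-action to the quotient: Proposition~\ref{prop:soft-quotient} requires that $M$ and $\Sigma(M)$ preserve epimorphisms, and for an \emph{arbitrary} model $M$ this holds only because AC splits every epi in $\Set$ (so that any functor preserves them). The \enquote{preservation-of-epimorphisms machinery underlying Theorem~\ref{thm:alg-elem-2-sigs-are-representable}} that you propose to reuse establishes this choice-free only for the initial model of an algebraic signature; it does not apply to a general $M$, so your plan succeeds only because AC is assumed globally --- for a reason your write-up does not identify. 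Second, you leave open the one substantive verification, namely that $L(M)$ satisfies $E$ (equivalently, your unproved closure-under-intersection claim). This is precisely where elementariness of the equations enters: in the paper's argument, niceness of the source $U$ makes $U(p^M)$ epimorphic, reducing $e_{1,L(M)} = e_{2,L(M)}$ to its precomposite with $U(p^M)$, which by naturality is checked in each target $2$-model $S_j$; in your formulation one would instead use that the comparison maps $L(M)^{(n)} \to (M/{\sim_j})^{(n)}$ are jointly pointwise monic, exploiting that the target of an elementary equation is $\Theta^{(n)}$. For non-elementary $E$ neither argument is available, so this hypothesis (tacit in the proposition's statement, explicit in the formalization) must appear in your proof rather than remain ambient.
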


\section{Proof of Theorem~\ref{thm:alg-elem-2-sigs-are-representable}}
\label{secappendix:proof-main}

Our main technical result on effectiveness is the following Lemma~\ref{lem:hoat-initial}.
In Theorem~\ref{thm:alg-elem-2-sigs-are-representable}, we give a much simpler criterion that encompasses all the examples we give.

\begin{lemma}[\coqident{SoftEquations.AdjunctionEquationRep}{elementary_equations_preserve_initiality}]
  \label{lem:hoat-initial}
  Let $(\Sigma,E)$ be a 2-signature such that:
  \begin{enumerate}
    \item $\Sigma$ sends epimorphic natural transformations to epimorphic natural transformations,
      \label{thm-cond-epimon}
    \item $E$ is a family of elementary equations,
      \label{thm-cond-softeq}
    \item the initial 1-model of $\Sigma$ exists,
      \label{thm-cond-ini}
    \item the initial 1-model  of $\Sigma$ preserves epimorphisms,
      \label{thm-cond-iniepi}
    \item the image by $\Sigma$ of the initial 1-model of $\Sigma$ preserves epimorphisms.
      \label{thm-cond-epipw}
  \end{enumerate}
  Then, the category of 2-models of $(\Sigma,E)$ has an initial object.
\end{lemma}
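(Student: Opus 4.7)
The plan is to construct the initial 2-model as a quotient of $\hat\Sigma$, the initial 1-model of $\Sigma$ (which exists by condition~\ref{thm-cond-ini}). Let $\alpha : \Sigma(\hat\Sigma) \to \hat\Sigma$ denote the action, and for each equation $(u,v) : S \rightrightarrows \Theta^{(n)}$ in $E$, form the parallel module morphisms $u(\hat\Sigma), v(\hat\Sigma) : S(\hat\Sigma) \to \hat\Sigma^{(n)}$ by instantiation. I would take $\sim$ to be the smallest congruence on $\hat\Sigma$ in $\Mon^\Sigma$ (i.e., a pointwise equivalence relation compatible with the monad structure and with $\alpha$) that identifies $u(\hat\Sigma)(s)$ with $v(\hat\Sigma)(s)$ for every such equation and every $s \in S(\hat\Sigma)(X)$, uniformly in $X$. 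Set $R := \hat\Sigma/{\sim}$, with quotient map $p : \hat\Sigma \twoheadrightarrow R$.

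The main technical step is to show that $R$ inherits the structure of a 1-model of $\Sigma$, with $p$ a morphism in $\Mon^\Sigma$. Here conditions~\ref{thm-cond-epimon}, \ref{thm-cond-iniepi}, and~\ref{thm-cond-epipw} all come into play: they guarantee that the relevant horizontal composites, namely $p(\hat\Sigma)$, $\hat\Sigma(p)$, and $\Sigma(p)$, remain pointwise epi, which lets the multiplication $\mu^{\hat\Sigma}$ and the action $\alpha$ descend uniquely along $p$. Once $R$ is a 1-model, I would verify that it satisfies each equation in $E$: niceness of $S$ (condition~\ref{thm-cond-softeq}) applied to the pointwise epi model morphism $p$ yields that $S(p) : S(\hat\Sigma) \to S(R)$ is pointwise epi; hence the identity $u(R) = v(R)$ follows from $u(R)\circ S(p) = v(R)\circ S(p)$, which by naturality of $u$ and $v$ in the model reduces to $p^{(n)}\circ u(\hat\Sigma) = p^{(n)}\circ v(\hat\Sigma)$, holding by construction of $\sim$.

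For initiality, given any 2-model $M$, initiality of $\hat\Sigma$ in $\Mon^\Sigma$ yields a unique 1-model morphism $f : \hat\Sigma \to M$. The kernel of $f$ is a congruence in $\Mon^\Sigma$ that contains all the identifications generating $\sim$ (since $M$ satisfies $E$), hence contains $\sim$ itself; so $f$ factors uniquely through $p$ as $\bar f : R \to M$, which inherits the structure of a 1-model morphism because $p$ is epi in $\Mon^\Sigma$, and is therefore a morphism of 2-models. The principal obstacle is making the quotient construction actually yield an object of $\Mon^\Sigma$: the various descent conditions for $\mu^{\hat\Sigma}$ and $\alpha$ along $p$ are exactly what the five epimorphism-preservation hypotheses are tailored to ensure, so the careful but routine bookkeeping there is the heart of the proof.
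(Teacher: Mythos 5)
Your proposal is correct in outline and arrives at the same three-part skeleton (quotient of $\hat\Sigma$, satisfaction of $E$ via niceness, initiality via epimorphicity of $p$ plus initiality of $\hat\Sigma$), but it constructs the quotient by a genuinely different route. You generate the congruence \emph{from below}, as the smallest relation compatible with the monad structure and the action $\alpha$ that collapses the pairs $\bigl(u(\hat\Sigma)(s), v(\hat\Sigma)(s)\bigr)$; the paper instead takes it \emph{from above}: it quotients $\hat\Sigma$ by the kernel of the family $(f_i : \hat\Sigma \to S_i)_i$ of \emph{all} initial $1$-model morphisms into models satisfying $E$ (Propositions~\ref{prop:quotient-monad} and~\ref{prop:soft-quotient}), setting $x \sim y$ iff $f_i(x) = f_i(y)$ for all $i$. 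The paper's choice buys two simplifications that your route must pay for by hand. First, compatibility of the relation with the monad structure and with $\alpha$ is inherited directly from the $f_i$ being model morphisms, whereas you must show your generated congruence exists and is stable: since $\Sigma$ is an arbitrary (epi-preserving) functor, \enquote{compatible with $\alpha$} cannot be pushed through $\Sigma$ elementwise and has to be phrased via the kernel of $\Sigma(p)$ relative to each candidate congruence's own projection --- this does work (the closure conditions are intersection-stable), but it is precisely the non-routine part your \enquote{routine bookkeeping} glosses over, and it is delicate in a formalized, choice-free setting. Second, the factorization of $f : \hat\Sigma \to M$ through $p$ is immediate by construction in the paper ($M$ occurs among the $S_i$), whereas you derive it from a kernel-contains-generators argument. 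Conversely, your construction makes satisfaction of $E$ nearly definitional ($p^{(n)} \circ u(\hat\Sigma) = p^{(n)} \circ v(\hat\Sigma)$ holds because the generators are collapsed, then niceness of the source finishes), while the paper must verify it pointwise: for $x \in U(\hat\Sigma)(X)$, membership in the kernel relation reduces, by naturality of $e_1, e_2$ in the model, to $e_{1,S_j}(U(f_j)(x)) = e_{2,S_j}(U(f_j)(x))$, which holds since each $S_j$ satisfies the equation. Your allocation of the hypotheses is accurate in both schemes: conditions~\ref{thm-cond-epimon}, \ref{thm-cond-iniepi} and~\ref{thm-cond-epipw} make $\Sigma(p)$, $\hat\Sigma(p)$ and the relevant whiskerings pointwise epi so that $\mu$ and $\alpha$ descend, and condition~\ref{thm-cond-softeq} supplies $U(p)$ epi for the equation check.
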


Before tackling the proof of Lemma~\ref{lem:hoat-initial}, we 
discuss how to derive Theorem~\ref{thm:alg-elem-2-sigs-are-representable} from it, and we prove some auxiliary results.

The \enquote{epimorphism} hypotheses of Lemma~\ref{lem:hoat-initial} are used to transfer structure
from the initial model $\hat{\Sigma}$ of the 1-signature $\Sigma$ onto a suitable quotient. 
There are different ways to prove these hypotheses:
\begin{itemize}
\item The axiom of choice implies conditions \ref{thm-cond-iniepi} and \ref{thm-cond-epipw} since,
  in this case, any epimorphism in $\Set$ is split and thus preserved by any functor.
\item Condition $\ref{thm-cond-epipw}$ is a consequence of condition \ref{thm-cond-iniepi}
  if $\Sigma$ sends monads preserving epimorphisms to modules preserving epimorphisms.
\item If $\Sigma$ is algebraic, then conditions
  \ref{thm-cond-epimon}, 
  \ref{thm-cond-ini}, \ref{thm-cond-iniepi} and \ref{thm-cond-epipw}
  are satisfied \cite[Example 43 and Lemma 45]{ahrens_et_al:LIPIcs:2018:9671}.
\end{itemize}

\noindent
From the remarks above, we derive the simpler and weaker statement of Theorem~\ref{thm:alg-elem-2-sigs-are-representable} that covers
all our examples, which are algebraic.

This section is dedicated to the proof of the main technical result,
Lemma~\ref{lem:hoat-initial}.
The reader inclined to do so may safely skip this section, and rely
on the correctness of the machine-checked proof instead.

The proof of Lemma~\ref{lem:hoat-initial} uses some quotient constructions that we present now:

\begin{proposition}
  [\coqident{Prelims.quotientmonadslice}{u_monad_def}]
  \label{prop:quotient-monad}
  Given a monad $R$ preserving epimorphisms and a collection of monad morphisms $(f_i:R \rightarrow S_i)_{i\in I}$,
  there exists a \emph{quotient} monad $R/(f_i)$ together with a \emph{projection}
  $p^R\colon R \rar R/(f_i)$, which is a morphism of monads such that each $f_i$ factors
  through $p$.
\end{proposition}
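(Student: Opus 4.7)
The plan is to construct $R/(f_i)$ as a pointwise kernel-pair quotient. For each set $X$, define the equivalence relation
\[
 a \sim_X b \quad \Longleftrightarrow \quad f_i(X)(a) = f_i(X)(b) \text{ for all } i \in I,
\]
and set $(R/(f_i))(X) := R(X)/{\sim_X}$, with $p^R_X\colon R(X) \to R(X)/{\sim_X}$ the canonical projection. Each $f_i(X)$ then factors uniquely through $p^R_X$ as some $\bar f_i(X)\colon (R/(f_i))(X) \to S_i(X)$, which will both supply the required factorization and underlie the verification of later monadic compatibilities. Functoriality of $R/(f_i)$ is immediate from the naturality of each $f_i$: if $a\sim_X b$ and $g\colon X\to Y$, then $f_i(Y)(R(g)(a)) = S_i(g)(f_i(X)(a)) = S_i(g)(f_i(X)(b)) = f_i(Y)(R(g)(b))$, so $R(g)$ descends. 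The unit is $\bar\eta := p^R \circ \eta^R$.

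The interesting step is the multiplication, and this is where the hypothesis that $R$ preserves epimorphisms is crucial. Since each $p^R_X$ is a surjection of sets, $R(p^R_X)\colon R(R(X)) \to R((R/(f_i))(X))$ is also a surjection. I would define $\bar\mu_X$ by a double descent. First, show that $p^R_X\circ\mu^R$ factors through $R(p^R_X)$: if $R(p^R_X)(a)=R(p^R_X)(b)$ for $a,b\in R(R(X))$, then using the factorization $f_i(X) = \bar f_i(X)\circ p^R_X$, naturality of $f_i$, and the monad morphism equation $f_i\circ \mu^R = \mu^{S_i}\circ (f_i * f_i)$, one computes
\[
 f_i(X)(\mu^R(a)) = \mu^{S_i}\bigl(S_i(\bar f_i(X))(S_i(p^R_X)(f_i(R(X))(a)))\bigr),
\]
and the right-hand side depends only on $R(p^R_X)(a)$ (again by naturality of $f_i$). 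Hence $\mu^R(a)\sim_X\mu^R(b)$, yielding a unique $\tilde\mu_X\colon R((R/(f_i))(X)) \to (R/(f_i))(X)$ with $\tilde\mu_X\circ R(p^R_X) = p^R_X\circ\mu^R$. A second application of essentially the same argument, using the surjectivity of $R(p^R_X)$ to lift representatives, shows $\tilde\mu_X$ factors through the further quotient $p^R_{(R/(f_i))(X)}$, producing the desired $\bar\mu_X\colon (R/(f_i))((R/(f_i))(X)) \to (R/(f_i))(X)$.

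The monad axioms for $(R/(f_i), \bar\mu, \bar\eta)$ now follow from those of $R$ by a routine diagram chase: each identity to be verified becomes an equation of maps out of $(R/(f_i))^k(X)$, and precomposition with the appropriate iterated projection (which is epi by the preservation hypothesis, applied finitely many times) reduces it to the corresponding identity for $R$. By construction, $p^R\colon R \to R/(f_i)$ is a monad morphism, and each $f_i = \bar f_i \circ p^R$ factors through it.

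The main obstacle is the double descent for $\bar\mu$: it is the one place where the bookkeeping of naturality squares for the $f_i$, the monad morphism equation, and the hypothesis that $R$ preserves epimorphisms have to be combined simultaneously. Without epimorphism preservation, $R(p^R_X)$ is not known to be surjective and the first descent step fails; everything else in the construction, including the universality of $p^R$ as the largest quotient through which all $f_i$ factor, is then formal.
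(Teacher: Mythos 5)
Your proposal is correct and follows essentially the same route as the paper: the paper also defines $R/(f_i)(X)$ as the pointwise quotient of $R(X)$ by the relation $x \sim y \iff f_i(x) = f_i(y)$ for all $i$, deferring the remaining verifications to Lemma 47 of \cite{ahrens_et_al:LIPIcs:2018:9671}, where the epimorphism-preservation hypothesis is used exactly as in your descent argument for the multiplication. Your write-up simply makes explicit (and correctly so) the double-descent details that the paper leaves to that cited lemma.
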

\begin{proof}
  The set $R/(f_i)(X)$ is computed as the quotient of $R(X)$  with respect to the
  relation $x \sim y$ if and only if $f_i(x)=f_i(y)$ for each $i\in I$.
  This is a straightforward adaptation of Lemma 47 of
  \cite{ahrens_et_al:LIPIcs:2018:9671}. 
\end{proof}

Note that the epimorphism preservation is implied by the axiom of choice, but
can be proven for the monad underlying the initial model $\hat{\Sigma}$ of an algebraic 1-signature $\Sigma$ 
even without resorting to the axiom of choice.

The above construction can be transported on $\Sigma$-models:

\begin{proposition}
[\coqident{SoftEquations.quotientrepslice}{u_rep_def}]
  \label{prop:soft-quotient}
  Let $\Sigma$ be a 1-signature sending epimorphic natural transformations to epimorphic
  natural transformations, and let $R$ be a $\Sigma$-model such that $R$ and $\Sigma(R)$ preserve epimorphisms.
  Let $(f_i : R \rightarrow S_i)_{i\in I}$ be a collection of $\Sigma$-model morphisms.  Then the monad
  $R/(f_i)$ has a natural structure of $\Sigma$-model and the
  quotient map $p^R\colon R \rar R/(f_i)$ is a morphism of
  $\Sigma$-models. Any morphism $f_i$ factors through $p^R$ in the category of
  $\Sigma$-models.
\end{proposition}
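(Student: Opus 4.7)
The plan is to lift the monad-level quotient of Proposition~\ref{prop:quotient-monad} to the category of $\Sigma$-models, by transporting the action $r\colon\Sigma(R)\to R$ along the projection $p^R$. Two observations drive the construction. First, $p^R$ is pointwise surjective by definition of the quotient and hence pointwise epimorphic. Second, by hypothesis on $\Sigma$, the induced module morphism $\Sigma(p^R)\colon\Sigma(R)\to\Sigma(R/(f_i))$ is then also pointwise epimorphic. I will exploit both facts repeatedly to descend structure from $R$ to $R/(f_i)$.

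To construct the action $\bar r\colon\Sigma(R/(f_i))\to R/(f_i)$, I would proceed by a diagonal fill-in: show that the composite $p^R\circ r\colon\Sigma(R)\to R/(f_i)$ factors through the epimorphism $\Sigma(p^R)$. It suffices to check that whenever $\Sigma(p^R)_X(a)=\Sigma(p^R)_X(a')$, also $p^R(r(a))=p^R(r(a'))$. Each $f_i$ is a $\Sigma$-model morphism, so $f_i\circ r = s_i\circ \Sigma(f_i)$, and by Proposition~\ref{prop:quotient-monad} each $f_i$ factors as $g_i\circ p^R$, whence $\Sigma(f_i)=\Sigma(g_i)\circ\Sigma(p^R)$. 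Thus $\Sigma(p^R)(a)=\Sigma(p^R)(a')$ forces $\Sigma(f_i)(a)=\Sigma(f_i)(a')$, hence $f_i(r(a))=f_i(r(a'))$ for every $i$, which is precisely the condition defining the quotient relation on $R$. This yields a unique $\bar r$ satisfying $\bar r\circ\Sigma(p^R) = p^R\circ r$.

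Next I would verify that $\bar r$ is a morphism of $R/(f_i)$-modules, i.e.\ a natural transformation compatible with module substitution. Both properties are equalities of natural transformations out of $\Sigma(R/(f_i))$ (respectively out of $\Sigma(R/(f_i))\cdot R/(f_i)$), and they can be established after precomposition with $\Sigma(p^R)$ (respectively with $\Sigma(p^R)\cdot p^R$). Upon precomposition, each such equation reduces by the defining identity of $\bar r$ to the corresponding equation for $r$, which holds because $r$ is an action on $R$. The only subtlety is that the precomposing natural transformations must be pointwise epimorphic so that cancellation is legal: $\Sigma(p^R)$ is so by hypothesis, and $\Sigma(p^R)\cdot p^R$ factors via the interchange law as $\Sigma(p^R)_{R/(f_i)}\circ\Sigma(R)(p^R)$, a composite of pointwise epimorphisms since $\Sigma(R)$ preserves epimorphisms by hypothesis.

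The two remaining claims are short. That $p^R$ is a morphism of $\Sigma$-models is exactly the equation $\bar r\circ\Sigma(p^R)=p^R\circ r$ used to define $\bar r$. For the factorisation of $f_i$ in the category of $\Sigma$-models, the underlying monad morphism $g_i$ is already available from Proposition~\ref{prop:quotient-monad}; to see that $g_i$ commutes with the actions, precompose the desired equation $s_i\circ\Sigma(g_i) = g_i\circ\bar r$ with the epimorphism $\Sigma(p^R)$ and observe that both sides collapse to $s_i\circ\Sigma(f_i)=f_i\circ r$, which holds because $f_i$ is itself a $\Sigma$-model morphism. The main obstacle I anticipate lies in the third paragraph: keeping the various epimorphism hypotheses straight when checking linearity of $\bar r$, as the assumptions that $\Sigma$ preserves pointwise-epi natural transformations and that $\Sigma(R)$ preserves epimorphisms each play a distinct role and must be combined in the right order.
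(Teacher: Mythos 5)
Your proof is correct and takes essentially the same route as the paper, whose own proof is just the remark that one repeats the quotient-and-descent reasoning of Proposition~\ref{prop:quotient-monad}: you transport the action along the pointwise epimorphism $\Sigma(p^R)$ (well-definedness via the factorizations $f_i = g_i \circ p^R$) and verify all axioms by epi-cancellation, invoking the hypothesis that $\Sigma(R)$ preserves epimorphisms exactly where it is needed, namely for the horizontal composite $\Sigma(p^R)\cdot p^R$. Your chosen decomposition $\Sigma(p^R)_{R/(f_i)}\circ \Sigma(R)(p^R)$ is indeed the right one, since the other order would require $\Sigma(R/(f_i))$ to preserve epimorphisms, which is not among the hypotheses.
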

The fact that $R$ and $\Sigma(R)$ preserve epimorphisms is implied by the axiom of choice.
The proof follows the same line of reasoning as the proof of Proposition~\ref{prop:quotient-monad}.

Now we are ready to prove the main technical lemma:

\begin{proof}[Proof of Lemma~\ref{lem:hoat-initial}]

Let $\Sigma$ be an effective 1-signature, and let $E$ be a set of elementary $\Sigma$-equations.
The plan of the proof is as follows:

\begin{enumerate}
 \item Start with the initial model $(\hat\Sigma,\sigma)$,
  with $\sigma : \Sigma (\hat\Sigma) \to \hat\Sigma$.
\item Construct the quotient model $\hat\Sigma/(f_i)$ according to Proposition \ref{prop:soft-quotient} where
  $(f_i : \hat\Sigma \rightarrow S_i)_i$ is the collection of all initial $\Sigma$-morphisms from $\hat\Sigma$ to
  any $\Sigma$-model satisfying the equations.
  We denote by $\sigma/(f_i) : \Sigma(\hat\Sigma/(f_i)) \rightarrow \hat\Sigma/(f_i)$ the action
  of the quotient model.
 \item Given a model $M$ of the 2-signature $(\Sigma, E)$, we obtain a morphism $i_M : \hat\Sigma/(f_i) \to M$ from
   Proposition \ref{prop:soft-quotient}.  Uniqueness of $i_M$ is shown
   using epimorphicity of the projection $p:\hat\Sigma \rightarrow \hat\Sigma/(f_i)$.
   For this, it suffices to show uniqueness of the composition $i_M \circ p : \hat{\Sigma} \to M$
   in the category of 1-models of $\Sigma$,
   which follows from initiality of $\hat{\Sigma}$.
 \item The verification that $\bigl(\hat\Sigma/(f_i), \sigma/(f_i)\bigr)$ satisfies the equations is given below.
   Actually, it follows the same line of reasoning as in the proof of Proposition \ref{prop:quotient-monad}
   that $\hat\Sigma/(f_i)$ satisfies the monad equations.
\end{enumerate}
%
%
Let $e = (e_1, e_2) : U \to \Theta^{(n)}$ be an elementary equation of $E$.
We want to prove that the two arrows
  \begin{equation*}
    e_{1,\hat\Sigma/(f_i)}, e_{2,\hat\Sigma/(f_i)} \colon
    U(\hat\Sigma/(f_i))\rar (\hat\Sigma/(f_i))^{(n)}
  \end{equation*}
  are equal.
  As $p$ is an epimorphic natural transformation, $U(p)$ also is by definition of an elementary equation.
  It is thus sufficient to prove that 
  \[e_{1,\hat\Sigma/(f_i)} \circ U(p) =  e_{2,\hat\Sigma/(f_i)} \circ U(p) \enspace , \]
  which, by naturality of $e_1$ and $e_2$, is equivalent to
  $p^{(n)}\circ e_{1,\hat\Sigma} = p^{(n)}\circ e_{2,\hat\Sigma}$.

  Let $x$ be an element of $U(\hat\Sigma)$ and let us show that 
  $p^{(n)}( e_{1,\hat\Sigma} (x)) = p^{(n)}( e_{2,\hat\Sigma}(x))$.
  By definition of $\hat\Sigma/(f_i)$ as a pointwise quotient (see Proposition \ref{prop:quotient-monad}),
  it is enough to show that for any $j$, the equality 
  $f_j^{(n)}( e_{1,\hat\Sigma} (x)) = f_j^{(n)}( e_{2,\hat\Sigma}(x))$ is satisfied.
  Now, by naturality of  $e_1$ and $e_2$, this equation is equivalent to 
  $ e_{1,S_j}(U(f_j)(x))) = e_{2,S_j}(U(f_j)(x)))$
  which is true since $S_j$ satisfies the equation $e_1 = e_2$.
  \end{proof}

\section{Examples of algebraic 2-signatures}
\label{sec:examples-higher-order-theories}

We already illustrated our theory by looking at the paradigmatic case of lambda calculus modulo $\beta$- and $\eta$-equations (Examples~\ref{ex:lcbe} and~\ref{ex:lcbe2}).  This section collects further examples of application of our results.

In our framework, complex signatures can be built out of simpler ones by taking their
coproducts. 
Note that the class of algebraic 2-signatures encompasses the algebraic 1-signatures and
is closed under arbitrary coproducts:
the prototypical examples of algebraic 2-signatures given in this section
can be combined with any other algebraic 2-signature, yielding an effective 2-signature thanks to Theorem~\ref{thm:alg-elem-2-sigs-are-representable}.

\subsection{Monoids}
\label{sec:example-monoids}

We begin with an example of monad for a first-order syntax with equations.
Given a set $X$, we denote by $M(X)$ the free
monoid built over $X$.  This is a classical example of monad over the
category of (small) sets.  The monoid structure gives us, for each set
$X$, two maps $m_X\colon M(X) \times M(X) \rar M(X)$ and $e_X\colon 1
\rar M(X)$ given by the product and the identity respectively.  It can
be easily verified that $m\colon M^2 \rar M$ and $e\colon 1 \rar M$
are $M$-module morphisms.  In other words, $(M,\rho) =
(M,[m, e])$ is a model of the 1-signature 
$\Sigma = \Theta\times\Theta + 1$.

We break the tautological morphism of $\Sigma$-modules (cf.\ Example~\ref{ex:sigma-mor-of-sig-mor})
into constituent pieces, defining $\m := \tau \circ \inl : \Theta \times \Theta \to \Theta$ and
$\e := \tau \circ \inr : 1 \to \Theta$.

Over the 1-signature $\Sigma$ we specify equations postulating
\emph{associativity} and \emph{left and right unitality}
as follows:
\begin{equation*}
  \xymatrix@C=25pt@R=2pt{
    \Theta^3 \ar[rr]^{\Theta\times \m} &&
    \Theta^2\ar[r]^{\m} & \Theta \\
    \Theta^3 \ar[rr]_{\m\times \Theta} &&
    \Theta^2\ar[r]_{\m} & \Theta
    }
    \qquad
  \xymatrix@C=25pt@R=2pt{
    \Theta\ar[r]^{\e\times \Theta} & \Theta^2 \ar[r]^{\m} & \Theta \\
    \Theta \ar[rr]_{1} && \Theta
    }
  \qquad
  \xymatrix@C=25pt@R=2pt{
    \Theta\ar[r]^{\Theta\times \e} & \Theta^2 \ar[r]^{\m} & \Theta \\
    \Theta\ar[rr]_{1} && \Theta
    }
\end{equation*}
and we denote by $E$ the family consisting of these three
$\Sigma$-equations. All are elementary since their codomain is $\Theta$,
and their domain a product of $\Theta$s.

One checks easily that $(M,[m,e])$ is the initial model of $(\Sigma,E)$.

Several other classical (equational) algebraic
theories, such as groups and rings, can be treated similarly, see Section~\ref{sec:alg-ths} below.
However, at the present state we cannot model
theories with partial construction (e.g., fields).

\subsection{Colimits of algebraic 2-signatures}
\label{s:col-alg-sigs}
In this section, we argue that our framework encompasses any
colimit of algebraic 2-signatures. 

Actually, the class of algebraic 2-signatures is not stable under colimits,
as this is not even the case for algebraic 1-signatures. However, we can
weaken this statement as follows:
\begin{proposition}\label{prop:colimit-of-algs}
Given any colimit of algebraic 2-signatures, there
is an algebraic 2-signature yielding an isomorphic category of models.
\end{proposition}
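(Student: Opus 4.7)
The plan is to express an arbitrary colimit in $\twoSig$ as a coequalizer of two morphisms between coproducts (the standard decomposition in a cocomplete category), and to handle each piece separately, constructing in each case an algebraic 2-signature whose category of models is isomorphic to that of the colimit.

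\emph{Coproducts.} Given algebraic 2-signatures $(\Sigma_i,E_i)_{i \in I}$, their coproduct in $\twoSig$ has underlying 1-signature $\coprod_i \Sigma_i$ together with the disjoint union of the families $E_i$, each reindexed along the canonical inclusions. The underlying 1-signature is a coproduct of coproducts of elementary signatures, hence algebraic. Each reindexed equation has the same underlying functor as the original one, so its source remains nice (niceness, for $\Sigma$-modules arising from 1-signatures via Example~\ref{ex:sigma-mod-from-1-sig}, is a property of the underlying functor on $\Mon$ and is therefore independent of the base 1-signature) and its target retains the form $\Theta^{(n)}$. So the coproduct of algebraic 2-signatures is itself algebraic.

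\emph{Coequalizers.} Let $f,g \colon (\Sigma_1,E_1) \rightrightarrows (\Sigma_2,E_2)$ be morphisms of algebraic 2-signatures. By the description recalled in Section~\ref{sec:2-sigs}, the coequalizer in $\twoSig$ is $(\operatorname{coeq}(f,g),E_2)$, and the 1-signature $\operatorname{coeq}(f,g)$ is typically not algebraic. The replacement I propose is the 2-signature $(\Sigma_2, E_2 \cup \{\tau \circ f = \tau \circ g\})$, where $\tau = \tau^{\Sigma_2} \colon \Sigma_2 \to \Theta$ is the tautological $\Sigma_2$-module morphism of Example~\ref{ex:sigma-mor-of-sig-mor} and $f,g$ are viewed as $\Sigma_2$-module morphisms $\Sigma_1 \to \Sigma_2$ by the same example. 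The added equation is elementary: its target is $\Theta = \Theta^{(0)}$, and its source $\Sigma_1$ is algebraic and hence nice (Example~\ref{ex:elementary-algebraic}); the equations of $E_2$ remain elementary. Moreover, a $\Sigma_2$-action $r \colon \Sigma_2(R) \to R$ satisfies $\tau \circ f = \tau \circ g$ precisely when $r \circ f(R) = r \circ g(R)$, which by the universal property of the coequalizer is exactly the condition that $r$ factors through $\Sigma_2(R) \to \operatorname{coeq}(f,g)(R)$ to yield an action of $\operatorname{coeq}(f,g)$. This correspondence is manifestly natural in morphisms of models and yields an isomorphism between the category of models of the coequalizer 2-signature and the category of models of the (algebraic) replacement.

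\emph{General colimits.} Combining the two steps: given a diagram of algebraic 2-signatures, take the two coproducts in its standard coequalizer-of-coproducts presentation (which produce algebraic 2-signatures on the nose), then apply the coequalizer step to the resulting double arrow (which yields an algebraic 2-signature with isomorphic category of models). This produces the algebraic 2-signature required by the statement.

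The main obstacle is the coequalizer step: one has to verify both that adding the single $\Sigma_2$-equation $\tau \circ f = \tau \circ g$ genuinely imposes the coequalizer's universal property at the level of actions, and that niceness is transported correctly under the change of base 1-signature. Both points reduce to the observation that the $\Sigma$-modules used throughout the paper come from 1-signatures and therefore depend only on the underlying monad, so their functorial behaviour is unaffected when $\Sigma$ is replaced.
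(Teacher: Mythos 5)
Your proposal is correct and follows essentially the same route as the paper's own proof: decompose the colimit into a coequalizer of coproducts, note that coproducts of algebraic 2-signatures are again algebraic, and replace the coequalizer $(\operatorname{coeq}(f,g),E_2)$ by $(\Sigma_2,\,E_2 \cup \{\tau^{\Sigma_2}\circ f = \tau^{\Sigma_2}\circ g\})$, using the universal property of the coequalizer to identify the two categories of models. Your treatment is in fact slightly more explicit than the paper's on the coproduct step (reindexing of equations and preservation of niceness, which the paper simply asserts as closure under coproducts), but the key idea and construction coincide.
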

\begin{proof}
  As the class of algebraic 2-signatures is closed under arbitrary coproducts,
  using the decomposition of colimits into coproducts and coequalizers,
  any colimit $\Xi$ of algebraic 2-signatures can be expressed as a coequalizer
  of two morphisms $f,g$ between some
  algebraic 2-signatures $(\Sigma_1,E_1)$ and $(\Sigma_2,E_2)$,
  \[
\xymatrix{
 (\Sigma_1,E_1) \ar@<.5ex>[r]^f \ar@<-.5ex>[r]_g & (\Sigma_2,E_2) \ar[r]^p & 
 \Xi=(\Sigma_3,E_2)
}.
\]
where $\Sigma_3$ is the coequalizer of the 1-signatures morphisms $f$ and $g$.
Note that the set of equations of $\Xi$ is  $E_2$, by definition of the coequalizer in
the category of 2-signatures.
Now, consider the algebraic 2-signature $\Xi' = (\Sigma_2, E_2 + \eqref{eq:colim-of-algs})$ consisting of the 1-signature $\Sigma_2$
and the equations of $E_2$ plus the following elementary equation (see Example~\ref{ex:elementary-algebraic}):
\begin{equation} \label{eq:colim-of-algs}
      \xymatrix@R=2pt{
        \Sigma_1 \ar[r]^{f} &
        \Sigma_2 \ar[r]^{\tau^{\Sigma_2}} &
        \Theta \\
        \Sigma_1 \ar[r]_{g} &
        \Sigma_2 \ar[r]_{\tau^{\Sigma_2}} &
        \Theta 
      }
\end{equation}
We show that $\Mon^{\Xi}$ and $\Mon^{\Xi'}$ are isomorphic.
A model of $\Xi'$ is a monad $R$ together with an $R$-module morphism
 $r:\Sigma_2(R)\to R$ such that $r\circ f_R=r\circ g_R$ and that the equations of $E_2$ are satisfied. By universal property of the
 coequalizer, this is exactly the same as giving an $R$-module morphism
 $\Sigma_3(R)\to R$ satisfying the equations of $E_2$, i.e., 
 giving $R$ an action of $\Xi=(\Sigma_3,E_2)$.

 It is straightforward to check that this correspondence yields an isomorphism
 between the category of models of $\Xi$ and the category of models of $\Xi'$.
\end{proof}

This proposition, together with the following corollary, allow us 
to recover all the examples presented in \cite{ahrens_et_al:LIPIcs:2018:9671}, as
  colimits of algebraic 1-signatures:
syntactic commutative binary operator, maximum operator, application à la 
differential lambda calculus, syntactic closure operator, integrated substitution
operator, coherent fixpoint operator.
\begin{corollary}
  \label{c:finitary-colim}
  If $F$ is a finitary endofunctor on $\Set$, then there is an algebraic 2-signature whose category
  of models is isomorphic to the category of 1-models of the 1-signature $F
  \cdot \Theta$.
\end{corollary}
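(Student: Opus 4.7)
The plan is to combine the classical fact that finitary endofunctors on $\Set$ are colimits of finite power functors with the preceding Proposition~\ref{prop:colimit-of-algs}. First I would express $F$ as a (small) colimit $F \iso \colim_i (-)^{n_i}$ of representable functors $\Set(n_i,-)$ with $n_i$ finite; this is a standard consequence of finitariness, since the category of finitary endofunctors on $\Set$ is the free cocompletion under filtered colimits of the finite powers, and every presheaf is a colimit of representables. Post-composing with $\Theta$ then exhibits $F\cdot\Theta$ as the colimit $\colim_i \Theta^{(n_i)}$ in $\Sig$: indeed, on any monad $R$ we have $(F\cdot\Theta)(R) = F(R) = \colim_i R^{n_i} = \colim_i \Theta^{(n_i)}(R)$, and being pointwise a colimit of $R$-modules, this computes the colimit in $\Sig$.

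Next I would lift this to $\twoSig$ via the functor $F_\Sig$. Each elementary 1-signature $\Theta^{(n_i)}$ is algebraic, so each $(\Theta^{(n_i)}, \emptyset) = F_\Sig(\Theta^{(n_i)})$ is an algebraic 2-signature. Since $F_\Sig$ is a left adjoint, it preserves colimits, so the colimit $\Xi := \colim_i (\Theta^{(n_i)},\emptyset)$ in $\twoSig$ has underlying 1-signature $F\cdot\Theta$ and an empty family of equations (any equation in a colimit of $F_\Sig$-images can only come from the base case). A model of $\Xi$ is then exactly a 1-model of $F\cdot\Theta$; more formally, by the universal property of the colimit, a compatible family of actions $\Theta^{(n_i)}(R)\to R$ is the same as a single action $F(R)\to R$.

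Now I apply Proposition~\ref{prop:colimit-of-algs} to the colimit $\Xi$, since it is a colimit of algebraic 2-signatures. This yields an algebraic 2-signature $\Xi'$ whose category of models is isomorphic to $\Mon^{\Xi}$, which in turn is isomorphic to the category of 1-models of $F\cdot\Theta$. Composing the two isomorphisms finishes the proof.

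The main obstacle I anticipate is the precise handling of step one: writing $F$ as a colimit of $(-)^{n_i}$ and then verifying that this colimit is preserved by the embedding of endofunctors of $\Set$ into $\Sig$ via $F \mapsto F \cdot \Theta$. The second concern is checking that $F_\Sig$, being a left adjoint, transports colimits of algebraic 1-signatures to colimits of algebraic 2-signatures cleanly; this is routine but relies on the lemma stating that $U_\Sig$ is a coreflection right adjoint to $F_\Sig$. Once these pieces are in place, Proposition~\ref{prop:colimit-of-algs} supplies the substantive content and the conclusion is immediate.
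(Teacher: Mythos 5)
Your proposal is correct and takes essentially the same route as the paper, which presents $F$ as the coend $\int^{n\in\NN}F(n)\times \_^n$ --- just the canonical form of your colimit of representables --- deduces that $F\cdot\Theta$ is a (pointwise) colimit of algebraic signatures, and concludes by Proposition~\ref{prop:colimit-of-algs}. One notational slip worth fixing: in the paper's conventions $\Theta^{(n)}$ denotes the $n$-fold \emph{derivative} $R\mapsto R^{(n)}$, whereas your computation $(F\cdot\Theta)(R)=\colim_i R^{n_i}$ shows you mean the elementary \emph{powers} $\Theta^{n_i}=\Theta^{(0,\dots,0)}$.
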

\begin{proof}
  It is enough to prove that $F\cdot\Theta$ is a colimit of algebraic 1-signatures.

  As $F$ is finitary, it is isomorphic to the coend $\int^{n\in\NN}F(n)
  \times \_^n$ 
  where $\NN$ is the full subcategory of $\Set$ of finite ordinals (see, e.g., \cite[Example~3.19]{DBLP:journals/mscs/VelebilK11}).
  As colimits are computed pointwise, the 1-signature $F\cdot \Theta$ is the coend $\int^{n\in
    \NN}F(n)\times \Theta^n$, and as such, it is a colimit of algebraic 2-signatures.
\end{proof}

However, we do not know whether we can recover our theorem \cite[Theorem~35]{ahrens_et_al:LIPIcs:2018:9671} stating 
  that any presentable 1-signature is effective.

\subsection{Algebraic theories}
\label{sec:alg-ths}
From the categorical point of view, several fundamental algebraic
structures in mathematics can be conveniently and elegantly described
using finitary monads.  For instance, the category of monoids can be
seen as the category of Eilenberg–Moore algebras of the monad of
lists.  Other important examples, like groups and rings, can
be treated analogously.  A classical reference on the subject is the
work of Manes, where such monads are significantly called
\emph{finitary algebraic theories} \cite[Def.\ 3.17]{Manes}.

We want to show that such \enquote{algebraic theories} fit in our
framework, in the sense that they can be incorporated into an algebraic
2-signature, with the effect of enriching the initial model with the operations
of the algebraic theory, subject to the axioms of the algebraic theory.

For a finitary monad $T$, Corollary~\ref{c:finitary-colim} 
says how to encode the 1-signature $T\cdot \Theta$ as an algebraic 2-signature $(\Sigma_T,E_T)$.
 Models are monads $R$ together with an $R$-linear morphism $r:T\cdot R \to R$.

Now, for any model $(R,m)$ of $T\cdot\Theta$, we would like to enforce the usual $T$-algebra equations on the action $m$.
This is done thanks to the following equations, where $\tau$ denotes the tautological morphism of $T\cdot\Theta$-modules:
\begin{equation}
\label{eq:algebra-eqs}
 \begin{aligned}
 \xymatrix@R=2pt{ \Theta \ar[r]^-{\eta_T \cdot \Theta} & T\cdot \Theta
                     \ar[r]^-{\tau} &  \Theta 
                     \\
                  \Theta \ar[rr]_{1} && \Theta }
\end{aligned}
\qquad\qquad
 \begin{aligned}
 \xymatrix@R=2pt{
          T \cdot T \cdot \Theta \ar[r]^-{\mu_T \cdot \Theta} & T \cdot \Theta
          \ar[r]^-{\tau} &
          \Theta 
          \\
          T \cdot T \cdot \Theta \ar[r]_-{T {\tau}} & T \cdot \Theta
          \ar[r]_-{\tau} & \Theta }
 \end{aligned}
\end{equation}
The first equation is clearly elementary.
The second one is elementary thanks to the following lemma:
\begin{lemma}\label{lem:finitary-preserves-epi}
  Let $F$ be a finitary endofunctor on $\Set$.  Then $F$ preserves epimorphisms.
\end{lemma}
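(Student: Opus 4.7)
The plan is to exploit the polynomial presentation of $F$ already used in the proof of Corollary~\ref{c:finitary-colim}. Since epimorphisms in $\Set$ are exactly surjections, it suffices to show that $F(f)$ is surjective whenever $f : X \to Y$ is. Finitariness of $F$ yields an isomorphism $F \cong \int^{n \in \NN} F(n) \times (-)^n$, which realises $F(X)$ as a pointwise quotient of the polynomial set $\Phi(X) := \sum_{n \in \NN} F(n) \times X^n$; call $\pi : \Phi \Rightarrow F$ the resulting natural transformation, which is pointwise surjective by construction of the coend as a quotient of the coproduct.

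The first step is to verify that $\Phi$ preserves surjections. If $f : X \twoheadrightarrow Y$, then for each $n \in \NN$ the map $f^n : X^n \to Y^n$ is surjective (a finite product of surjections of sets is a surjection, by induction on the arity), hence so is $F(n) \times f^n$, and finally $\Phi(f) = \sum_n (F(n) \times f^n)$ is surjective as a coproduct of surjections.

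The second step is a one-line diagram chase: by naturality of $\pi$, we have $\pi_Y \circ \Phi(f) = F(f) \circ \pi_X$. The left-hand side is a composite of two surjections, hence surjective; since $\pi_X$ is surjective as well, this forces $F(f)$ to be surjective. I do not anticipate any serious obstacle: the only point that requires a moment of care is the observation that the whole argument avoids the axiom of choice, using only finite choice (a theorem of ZF), which is consistent with the constructive flavour the paper maintains when it argues in \cite[Example 43]{ahrens_et_al:LIPIcs:2018:9671} that an algebraic 1-signature preserves epimorphisms.
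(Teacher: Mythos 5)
Your proof is correct and takes essentially the same route as the paper's: both present $F$ as a pointwise-epimorphic quotient $\pi : \coprod_{n\in\NN} F(n)\times(-)^n \to F$ obtained from the coend, observe that the polynomial functor preserves surjections, and conclude by chasing the naturality square. One small remark: in your final step the surjectivity of $\pi_X$ is not actually needed---the identity $F(f)\circ\pi_X = \pi_Y\circ\Phi(f)$ exhibits $F(f)\circ\pi_X$ as surjective, and right-cancellation of epimorphisms (the last map of a surjective composite is surjective) already yields the claim, which is exactly how the paper concludes.
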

\begin{proof}
As $F$ is finitary, it is isomorphic to the coend $\int^{n\in\NN}F(n) \times \_^n$  \cite[Example~3.19]{DBLP:journals/mscs/VelebilK11}.
By decomposing it as a coequalizer of coproducts, we get an epimorphism
$\alpha : \coprod_{n\in\NN}F(n)\times\_^n \to F$.
Now, let $f:X\to Y$ be a surjective function between two sets. We show
that $F(f)$ is epimorphic. By naturality, the following diagram commutes:
\[
  \xymatrix{
    \coprod_{n\in\NN}F(n)\times X^n \ar[r]^{F(n)\times f^n}
    \ar[d]_{\alpha_X}
    & 
    \coprod_{n\in\NN}F(n)\times Y^n \ar[d]^{\alpha_Y}
    \\
    F(X) \ar[r]_{F(f)} & F(Y)
  }
\]
The composition along the top-right is epimorphic by composition of epimorphisms. 
Thus, the bottom left is also epimorphic, and so is $F(f)$ as the last morphism
of this composition.
    
\end{proof}

In conclusion, we have exhibited the algebraic 2-signature $(\Sigma_T , E'_T)$,
where
$E'_T$ extends the family $E_T$  with the two elementary equations of
Diagram~\ref{eq:algebra-eqs}. This signature
 allows to enrich any other algebraic 2-signature with the operations of the
algebraic theory $T$, subject to the relevant equations.

\subsection{Fixpoint operator}\label{sec:fixpoint-operator}

Here, we show the algebraic 2-signature corresponding to a fixpoint operator. In
\cite[Section~9.4]{ahrens_et_al:LIPIcs:2018:9671} we studied 
fixpoint operators in the context of 1-signatures.  In that setting,
we treated a \emph{syntactic} fixpoint operator
called \emph{coherent} fixpoint operator, somehow reminiscent of mutual letrec. We were able to impose many natural equations to this operator but we were
not able to enforce the fixpoint equation.  In this
section, we show how a fixpoint operator can be fully specified by an algebraic
2-signature. We restrict our discussion to the unary
case; 
the coherent family of multi-ary fixpoint operators 
presented in \cite[Section~9.4]{ahrens_et_al:LIPIcs:2018:9671},
now including the fixpoint equations,
can also be specified, in an analogous way,
via an algebraic 2-signature.

Let us start by recalling the following

\begin{definition}
\label{def:unary-fp-op}
A \textbf{unary fixpoint operator for a monad $R$}
\cite[Definition~50]{ahrens_et_al:LIPIcs:2018:9671} is a module
morphism $f$ from $R'$ to $R$ that makes the following diagram 
commute, where $\sigma$ is the substitution morphism
defined as the uncurrying (see Definition~\ref{def:half-eq-curry}) of the identity morphism on $\Theta'$:
\begin{equation*}
  \label{eq:fixpoint}
    \begin{tikzcd}
      R'
      \arrow[rr, "(id_{R'} \text{,} f)"]
      \arrow[rd,"f",swap]
      & & R' \times R \arrow[ld, "\sigma_R"] \\
      & R
    \end{tikzcd}
\end{equation*}
  \end{definition}
  
  In order to rephrase this definition, we introduce the obviously algebraic 2-signature $\Upsilon_\fix$
  consisting of the
1-signature $\Sigma_\fix = \Theta'$ and the
family $E_\fix$ consisting of the single following $\Sigma_\fix$-equation:
\begin{equation}
\label{diag:fixpoint}
  e_\fix :
    \vcenter{\vbox{
  \xymatrix@C=35pt@R=2pt{
    \Theta' \ar[r]^-{\langle 1,\tau \rangle} &
    \Theta'\times\Theta \ar[r]^-{\sigma} & \Theta \\
    \Theta' \ar[rr]_{\tau} && \Theta
    }}}
\end{equation}

This allows us to rephrase the previous definition as follows: a unary fixpoint operator for a monad $R$ is just an action of the 2-signature $\Upsilon_\fix$ in $R$.

  The name \enquote{fixpoint operator} is motivated by the following proposition:
\begin{proposition}
  [{\cite[Proposition~51]{ahrens_et_al:LIPIcs:2018:9671}}]
  \label{prop:fixpoint-combinator-action}
  Fixpoint combinators are in one-to-one correspondence with
  actions of $\Upsilon_\fix$ in the 
 monad $\LCb$ of the lambda calculus modulo $\beta$- and $\eta$-equality.
\end{proposition}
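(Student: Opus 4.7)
The plan is to exhibit mutually inverse maps between fixpoint combinators in $\LCb$ and actions of $\Upsilon_\fix$ on $\LCb$; by the reformulation following Definition~\ref{def:unary-fp-op}, the latter are exactly module morphisms $f : \LCb' \to \LCb$ satisfying $f_X(t) = t[* \mapsto f_X(t)]$ for every $t \in \LCb'(X) = \LCb(X + \{*\})$.

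In one direction, to a fixpoint combinator $Y \in \LCb(\emptyset)$ I would associate the module morphism $f^Y$ defined by $f^Y_X(t) := \app(\iota_X(Y), \abs_X(t))$, where $\iota_X : \LCb(\emptyset) \to \LCb(X)$ is the canonical map. The module morphism axioms follow routinely from $Y$ being closed and from $\app$ and $\abs$ being module morphisms. The fixpoint equation for $f^Y$ is then a one-liner: applying the defining property of $Y$, namely $\app(Y, s) =_{\beta\eta} \app(s, \app(Y, s))$, with $s = \abs(t)$ yields $\app(Y, \abs(t)) = \app(\abs(t), f^Y_X(t))$, and by $\beta$ the right-hand side reduces to $t[* \mapsto f^Y_X(t)]$.

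In the other direction, from an action $f$ I would extract the closed term $Y^f := \abs(f_{\{x\}}(\app(\var(x), \var(*))))$ and check that it is a fixpoint combinator by computing, in $\LCb(\{z\})$: by $\beta$ together with the naturality of $f$, $\app(\iota(Y^f), \var(z)) = f_{\{z\}}(\app(\var(z), \var(*)))$; then the fixpoint equation satisfied by $f$ unfolds this to $\app(\var(z), f_{\{z\}}(\app(\var(z), \var(*)))) = \app(\var(z), \app(\iota(Y^f), \var(z)))$, as required.

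That these two maps are mutually inverse is where the main bookkeeping occurs. The round trip $Y \mapsto f^Y \mapsto Y^{f^Y}$ collapses by two applications of $\eta$ (first $\abs(\app(\var(x), \var(*))) = \var(x)$, then $\abs(\app(\iota(Y), \var(x))) = Y$). For the round trip $f \mapsto Y^f \mapsto f^{Y^f}$, one computes $f^{Y^f}_X(t) = \app(\iota(\abs(f_{\{x\}}(\app(\var(x), \var(*))))), \abs(t))$, which reduces by $\beta$ to $f_{\{x\}}(\app(\var(x), \var(*)))[x \mapsto \abs(t)]$; the module morphism property of $f$ identifies this with $f_X(\app(\iota(\abs(t)), \var(*)))$, and a final $\beta$-step collapses $\app(\iota(\abs(t)), \var(*))$ to $t$. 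The main obstacle is precisely this last calculation, where one must commute the extended substitution $[x \mapsto \abs(t),\, * \mapsto \var(*)]$ through $f$ via the module morphism law before exploiting $\beta$, and it is essentially the same bookkeeping carried out in the proof of Proposition~51 of~\cite{ahrens_et_al:LIPIcs:2018:9671}.
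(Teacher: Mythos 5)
Your proposal is correct and takes essentially the same route as the paper: your $f^Y$ is precisely the map $\hat Y(t) := \app(Y, \abs\, t)$ written out after Proposition~\ref{prop:fixpoint-combinator-action}, and your inverse $Y^f := \abs\bigl(f_{\{x\}}(\app(\var(x),\var(*)))\bigr)$, together with the two $\eta$-steps and the linearity-plus-$\beta$ bookkeeping for the round trips, is the argument of the cited Proposition~51 of \cite{ahrens_et_al:LIPIcs:2018:9671}. The only point worth making explicit is that verifying the combinator equation at the generic variable $z$ suffices: since $Y^f$ is closed, applying the monadic substitution $z \mapsto t$ instantiates it to arbitrary, possibly open, terms $t$.
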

Recall that fixpoint combinators
are lambda terms $Y$ 
satisfying, for any (possibly open) term $t$, the equation
\[
  \app(t, \app(Y, t)) = \app(Y,t) \enspace .
\]
Explicitly, such a combinator $Y$ induces a fixpoint operator
$\hat{Y}:\LCb'\rightarrow \LCb$ which associates, to any term $t$
depending on an additional variable $*$, the term
$\hat Y (t) := \app(Y,\abs \: \, t)$.

\section{Recursion}
\label{sec:recursion}

In this section, we explain how a recursion principle can be derived from our initiality result, and
give an example of a morphism---a \emph{translation}---between
monads defined via the recursion principle.

\subsection{Principle of recursion}
\label{ss:rec}

In our context, the recursion principle is a recipe for constructing a morphism
from the monad underlying the initial model of a 2-signature to an arbitrary monad.

\begin{proposition}[Recursion principle]
  Let $S$ be the monad underlying the initial model of the 2-signature $\Upsilon$. To any action $a$ of $\Upsilon$ in $T$ is associated a monad morphism $\hat a : S \to T$.  
\end{proposition}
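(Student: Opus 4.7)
The plan is to derive the monad morphism $\hat a$ directly from the initiality property of the model underlying $S$. Recall that by Definition of 2-signatures and their models, a model of $\Upsilon = (\Sigma, E)$ is a pair $(T, a)$ consisting of a monad $T$ together with an action $a$ of $\Sigma$ in $T$ such that $(T,a)$, viewed as a 1-model of $\Sigma$, satisfies all equations in $E$. In particular, since $a$ is assumed to be an action of $\Upsilon$ in $T$, the pair $(T,a)$ is an object of $\Mon^{\Upsilon}$.

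First, I would make explicit the initial model corresponding to $S$: by hypothesis, $S$ is the monad underlying the initial object $\widehat{\Upsilon} = (S, s)$ of $\Mon^{\Upsilon}$, where $s \colon \Sigma(S) \to S$ is the canonical action. Next, by initiality of $\widehat{\Upsilon}$ in $\Mon^{\Upsilon}$, there exists a unique morphism of $\Upsilon$-models
\[
  \hat{\alpha} \colon (S, s) \longrightarrow (T, a).
\]
Unfolding the definition of morphism in $\Mon^{(\Sigma,E)}$, which is just a morphism in $\Mon^{\Sigma}$ between objects that happen to satisfy $E$, this datum $\hat\alpha$ is precisely a morphism of monads $\hat a \colon S \to T$ making the square
\[
  \xymatrix{
    **[l]\Sigma(S) \ar[r]^{s}\ar[d]_{\Sigma(\hat a)} & **[r] S \ar[d]^{\hat a} \\
    **[l] \hat a^*(\Sigma(T)) \ar[r]_{\hat a^* a} & **[r] \hat a^* T
  }
\]
commute. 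We then define $\hat a$ to be the underlying monad morphism of $\hat\alpha$.

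There is essentially no obstacle here: the content of the proposition is almost purely bookkeeping, amounting to observing that (i)~an action of $\Upsilon$ in $T$ is exactly the data required to form a $\Upsilon$-model $(T, a)$, and (ii)~the forgetful functor $\Mon^{\Upsilon} \to \Mon$ sends the unique mediating morphism into $(T,a)$ to a monad morphism $S \to T$. The only thing worth double checking is that, since we took $(T,a)$ to be a model of $\Upsilon$ (not merely of $\Sigma$), initiality is being invoked in the correct category; this is guaranteed by the hypothesis that $a$ is an action of $\Upsilon$, which by definition forces the equations in $E$ to hold in $T$.
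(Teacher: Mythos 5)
Your proof is correct and follows exactly the paper's own argument: the action $a$ turns $T$ into a model of $\Upsilon$, and $\hat a$ is the monad morphism underlying the unique morphism from the initial model, obtained via the forgetful functor $\Mon^{\Upsilon} \to \Mon$. Your version merely unfolds the bookkeeping (the commuting square for model morphisms and the check that initiality is invoked in $\Mon^{(\Sigma,E)}$ rather than $\Mon^{\Sigma}$) that the paper leaves implicit.
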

\begin{proof}
The action $a$ defines a 2-model $M$ of $\Upsilon$, and $\hat a$ is the monad morphism underlying the initial morphism to $M$. 
\end{proof}

Hence the recipe consists in the following two steps:
\begin{enumerate}
\item give $T$ an action of the 1-signature $\Sigma$;
 \item check that all the equations in $E$ are satisfied for the induced model.
\end{enumerate}

In the next section, we illustrate this principle.

\subsection{Translation of lambda calculus with fixpoint to lambda calculus}
\label{sec:translate-rec}

In this section, we consider the 2-signature
 $\Upsilon_{\LCbfix} := \Upsilon_{\LCb} + \Upsilon_{\fix}$ where the two components have been introduced above  (see Example~\ref{ex:2sig-lcbeta} and Section~\ref{sec:fixpoint-operator}).
 
 As a coproduct of algebraic 2-signatures, $\Upsilon_{\LCbfix}$ is itself
algebraic, and thus the initial model exists.
 The underlying monad $\LCbfix$ of the initial model can be understood as the monad 
of lambda calculus modulo $\beta$ and $\eta$ enriched with an \emph{explicit} fixpoint
 operator $\fix:\LCbfix'\rar \LCbfix$.
Now we build by recursion a monad morphism from this monad to the \enquote{bare}
 monad $\LCb$ of lambda calculus modulo $\beta$ and $\eta$.

As explained in Section~\ref{ss:rec}, we need to define an action of $\Upsilon_{\LCbfix}$ in  $\LCb$, that is to say an action of
$\Upsilon_{\LCb}$ plus an action of $\Upsilon_{\fix}$.
For the action of $\Upsilon_{\LCb}$, we take the one yielding the initial model. 

Now, in order to find an action of $\Upsilon_{\fix}$ in $\LCb$, we choose a fixpoint combinator $Y$ (say the one of Curry) and take the action $\hat Y$ as defined at the end of Section~\ref{sec:fixpoint-operator}.

In more concrete terms, our translation is a kind of compilation which replaces each occurrence of the explicit fixpoint operator $\fix(t)$
with $\app(Y,\abs \: \, t)$.

\bibliographystyle{plainurl}
\bibliography{strengthened}

\begin{thebibliography}{10}

\bibitem{ahrens_relmonads}
Benedikt Ahrens.
\newblock Modules over relative monads for syntax and semantics.
\newblock {\em Mathematical Structures in Computer Science}, 26:3--37, 2016.
\newblock \href {http://dx.doi.org/10.1017/S0960129514000103}
  {\path{doi:10.1017/S0960129514000103}}.

\bibitem{ahrens_et_al:LIPIcs:2018:9671}
Benedikt Ahrens, Andr{\'{e}} Hirschowitz, Ambroise Lafont, and Marco Maggesi.
\newblock High-level signatures and initial semantics.
\newblock 2019.
\newblock Extended version of publication at CSL 2018
  (\href{http://dx.doi.org/10.4230/LIPIcs.CSL.2018.4}{doi}).
\newblock \href {http://arxiv.org/abs/1805.03740v2}
  {\path{arXiv:1805.03740v2}}.

\bibitem{DBLP:conf/csl/FioreH10}
Marcelo~P. Fiore and Chung-Kil Hur.
\newblock Second-order equational logic (extended abstract).
\newblock In Anuj Dawar and Helmut Veith, editors, {\em CSL}, volume 6247 of
  {\em Lecture Notes in Computer Science}, pages 320--335. Springer, 2010.
\newblock \href {http://dx.doi.org/10.1007/978-3-642-15205-4\_26}
  {\path{doi:10.1007/978-3-642-15205-4\_26}}.

\bibitem{FPT}
Marcelo~P. Fiore, Gordon~D. Plotkin, and Daniele Turi.
\newblock Abstract syntax and variable binding.
\newblock In {\em 14th Annual {IEEE} Symposium on Logic in Computer Science,
  Trento, Italy, July 2-5, 1999}, pages 193--202, 1999.
\newblock \href {http://dx.doi.org/10.1109/LICS.1999.782615}
  {\path{doi:10.1109/LICS.1999.782615}}.

\bibitem{gabbay_pitts99}
Murdoch~J. Gabbay and Andrew~M. Pitts.
\newblock {A New Approach to Abstract Syntax Involving Binders}.
\newblock In {\em 14th Annual Symposium on Logic in Computer Science}, pages
  214--224, Washington, DC, USA, 1999. IEEE Computer Society Press.
\newblock \href {http://dx.doi.org/10.1109/LICS.1999.782617}
  {\path{doi:10.1109/LICS.1999.782617}}.

\bibitem{HM}
Andr{\'e} Hirschowitz and Marco Maggesi.
\newblock Modules over monads and linearity.
\newblock In D.~Leivant and R.~J. G.~B. de~Queiroz, editors, {\em WoLLIC},
  volume 4576 of {\em Lecture Notes in Computer Science}, pages 218--237.
  Springer, 2007.
\newblock \href {http://dx.doi.org/10.1007/978-3-540-73445-1\_16}
  {\path{doi:10.1007/978-3-540-73445-1\_16}}.

\bibitem{Hirschowitz-Maggesi-2010}
Andr{\'e} Hirschowitz and Marco Maggesi.
\newblock Modules over monads and initial semantics.
\newblock {\em Information and Computation}, 208(5):545--564, May 2010.
\newblock Special Issue: 14th Workshop on Logic, Language, Information and
  Computation (WoLLIC 2007).
\newblock \href {http://dx.doi.org/10.1016/j.ic.2009.07.003}
  {\path{doi:10.1016/j.ic.2009.07.003}}.

\bibitem{hofmann}
Martin Hofmann.
\newblock {Semantical Analysis of Higher-Order Syntax}.
\newblock In {\em In 14th Annual Symposium on Logic in Computer Science}, pages
  204--213. IEEE Computer Society Press, 1999.

\bibitem{KP}
G~Maxwell Kelly and A~John Power.
\newblock Adjunctions whose counits are coequalizers, and presentations of
  finitary enriched monads.
\newblock {\em Journal of pure and applied algebra}, 89(1-2):163--179, 1993.

\bibitem{Manes}
Ernest Manes.
\newblock {\em {Algebraic Theories}}, volume~26 of {\em Graduate Texts in
  Mathematics}.
\newblock Springer, 1976.

\bibitem{Coq:manual}
\mbox{The Coq development team}.
\newblock {\em The Coq Proof Assistant, version 8.8.0}, 2018.
\newblock Version 8.8.
\newblock URL: \url{http://coq.inria.fr}.

\bibitem{DBLP:journals/mscs/VelebilK11}
Ji\v{r}\'i Velebil and Alexander Kurz.
\newblock Equational presentations of functors and monads.
\newblock {\em Mathematical Structures in Computer Science}, 21(2):363--381,
  2011.
\newblock \href {http://dx.doi.org/10.1017/S0960129510000575}
  {\path{doi:10.1017/S0960129510000575}}.

\bibitem{UniMath}
Vladimir Voevodsky, Benedikt Ahrens, Daniel Grayson, et~al.
\newblock {UniMath --- a computer-checked library of univalent mathematics}.
\newblock {Available} at \url{https://github.com/UniMath/UniMath}.

\end{thebibliography}


\end{document}